\documentclass[11pt,a4paper]{article}
\usepackage[margin=2.3cm]{geometry}
\usepackage[T1]{fontenc}
\usepackage{amsmath,amssymb,amsthm}
\usepackage{booktabs,graphicx}
\usepackage[colorlinks=true,linkcolor=blue,citecolor=blue,urlcolor=blue]{hyperref}

\theoremstyle{plain}
\newtheorem{theorem}{Theorem}
\newtheorem{lemma}[theorem]{Lemma}
\newtheorem{proposition}[theorem]{Proposition}
\newtheorem{corollary}[theorem]{Corollary}
\newtheorem{conjecture}[theorem]{Conjecture}
\newtheorem{convention}[theorem]{Convention}
\newtheorem*{theoremA}{Theorem A}
\newtheorem*{theoremB}{Theorem B}
\newtheorem{problem}{Problem}
\theoremstyle{definition}
\newtheorem{definition}[theorem]{Definition}
\newtheorem{remark}[theorem]{Remark}

\newcommand{\Z}{\mathbb{Z}}
\newcommand{\Cay}{\mathrm{Cay}}

\newcommand{\pp}{\mathrm{pp}}
\newcommand{\indpp}{\mathrm{indpp}}
\newcommand{\ipp}{\mathrm{ipp}}

\title{\textbf{The Cayley Completion of a Graph}}
\author{Rigobert Fokam Souop\thanks{Laboratory of Energy, Signal, Imaging and
Automation (LESIA), University of Ngaound\'er\'e, P.O.\ Box 454, Ngaound\'er\'e,
Cameroon. \texttt{fokamrigobert@gmail.com}}
\and Laurent Bitjoka\thanks{LESIA, University of Ngaound\'er\'e; and Laboratory
of Scientific Artificial Intelligence and Applied Mathematics, University of
Garoua, P.O.\ Box 346, Garoua, Cameroon.
\texttt{laurent.bitjoka@univ-garoua.cm}}}
\date{August 2026}

\begin{document}

\maketitle

\begin{abstract}
A finite connected graph is rarely a Cayley graph. We measure how far it is
from being one: given $G$ with $n$ vertices and $m$ edges, how few edges must
be added, or added and deleted, before the result is a Cayley graph of an
abelian group of order $n$ on the same vertex set? This defines two
invariants, the completion number $\gamma^{+}$, which permits additions only,
and the Cayley edit distance $\gamma_{\triangle}$, which permits both, each
normalized by $m$.

We show that deciding the edit version is NP-complete already for a fixed
cyclic host, by a reduction from Hamiltonian Cycle in which the edit cost of a
labeling is $n+m-2k$ when the labeling realizes a longest path with $k$ edges;
there the optimal cost is $m-n+2\pp(G)$, which the matching number bounds in
polynomial time. We prove
that irregularity alone forces $\gamma^{+}(G)\ge n\Delta^{*}/(2m)-1$, where
$\Delta^{*}$ is the least $d\ge\Delta$ with $nd$ even, a bound computable in
linear time from the degree sequence; we characterize its equality case
exactly. It is attained on the star, where $\gamma^{+}(K_{1,q})=(q-1)/2$, and
the star maximizes $\gamma^{+}$, while $\gamma_{\triangle}$ is bounded by an
absolute constant. We determine paths and grids exactly, obtaining
$\gamma^{+}(P_n)=\gamma^{+}(P_n\,\square\,P_n)=1/(n-1)$, and we show that
$\gamma_{\triangle}(K_{1,q})$ tends to $2$, and not to the value $3/2$
suggested by the additive case.

We then report an exhaustive certified census of all $995$ connected graphs on
at most seven vertices. The degree bound is attained on $89.4\%$ of them and
the two invariants separate strictly on $84.7\%$, though both rates vary
sharply with order: attainment runs $100\%,100\%,84.8\%,89.7\%$ and separation
$0\%,61.9\%,73.2\%,87.7\%$ for $n=4,5,6,7$, so the pooled figures are
dominated by the $853$ graphs on seven vertices. The star is the unique
maximizer of both. We also show that edit count and the bi-Lipschitz
distortion of the completed host are independent invariants, moving in
opposite directions on stars and on paths. Data and certificates are deposited
at \texttt{doi:10.5281/zenodo.21852006}.
\end{abstract}

\medskip
\noindent\textbf{Keywords:} Cayley graph, abelian group, graph editing,
completion number, graph census.\\
\noindent\textbf{MSC 2020:} 05C25, 05C12, 05C85, 68Q17, 20K01.

\section{Introduction}

A finite connected graph is rarely a Cayley graph. One response, taken by
\cite{fokam-p1,fokam-p2}, is to enlarge the ambient object: seek an abelian
group $\Gamma$ and an isometric embedding of $G$ into $\Cay(\Gamma,S)$, paying
for exactness with a host that may be much larger than $G$. This paper takes
the opposite route. We keep the vertex set fixed and perturb the graph,
asking how few edges must be added, or added and deleted, before $G$ becomes
an abelian Cayley graph on its own $n$ vertices. The smallest instance is
familiar: a single added edge turns the path $P_n$ into the cycle $C_n$, and
$C_n$ is $\Cay(\Z_n,\{\pm1\})$. We call the general answer the \emph{Cayley
completion} of $G$ and study two invariants: the \emph{completion number}
$\gamma^{+}$, which permits additions only, and the \emph{Cayley edit
distance} $\gamma_{\triangle}$, which permits both, each normalized by
$|E(G)|$.

Two features make the question non-trivial. Being an abelian Cayley graph on
a fixed vertex set is not a property of the abstract graph alone but of the
graph together with a labeling of its vertices by group elements, so the
problem carries a search over $\Gamma$ and over labelings. And regularity is
necessary but far from sufficient: vertex-transitivity is already strictly
weaker than being a Cayley graph \cite{mckay-praeger1994}, so the invariants
do not collapse on regular graphs.

\paragraph{Main results.}
Our two principal results are the following; they are restated with proofs as
Theorem~\ref{thm:nphard} and Theorem~\ref{thm:degree}.

\begin{theoremA}[Hardness]
Deciding, for a fixed cyclic host $C_n$ and a given budget $B$, whether $G$
can be edited into $C_n$ with at most $B$ edits is NP-complete.
\end{theoremA}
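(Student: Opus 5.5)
Throughout, ``edited into $C_n$'' means: find a bijection $\phi\colon V(G)\to\Z_n$ that minimizes the symmetric difference between $E(G)$ and the pulled-back cycle edges $\{\{\phi^{-1}(i),\phi^{-1}(i+1)\}: i\in\Z_n\}$. The proof has the usual two parts, membership and hardness.

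\emph{Membership in NP.} A certificate is the labeling $\phi$. In polynomial time we build the $n$ cycle edges it induces and count the edges of $G$ outside this set plus the cycle edges missing from $G$. We then compare this count with $B$.

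\emph{Cost formula.} Fix a labeling $\phi$ and let $k$ be the number of cycle edges $\{\phi^{-1}(i),\phi^{-1}(i+1)\}$ that already lie in $E(G)$. The edit cost is
\[
  (m-k)+(n-k)=n+m-2k,
\]
since we delete the $m-k$ edges of $G$ not on the cycle and add the $n-k$ cycle edges absent from $G$. Minimizing the cost is therefore the same as maximizing $k$, the number of edges of $G$ lying on a Hamiltonian cycle of $K_n$ (the ordering given by $\phi$).
\begin{itemize}
\item If $G$ is Hamiltonian, then $k=n$ is achievable.
\item Otherwise $k<n$. The $k$ edges then form a union of vertex-disjoint paths in $G$ covering part of the cyclic order. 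Conversely, any such path system with $k$ edges in total, arranged consecutively, is realized by some $\phi$.
\end{itemize}
So $\max k=n$ exactly when $G$ has a Hamiltonian cycle, for $n\ge 3$. The small cases $n\le 2$, where $C_n$ degenerates, are handled trivially.

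\emph{Reduction.} Map an instance $G$ of Hamiltonian Cycle to $(G,C_n,B)$ with $B=m-n$. By the cost formula, some labeling has cost at most $m-n$ if and only if some labeling has $k\ge n$, i.e.\ $k=n$, i.e.\ $G$ is Hamiltonian. The construction is clearly polynomial. Hamiltonian Cycle is NP-complete, and it remains so on connected graphs with $m\ge n$, so the restriction to connected inputs costs nothing.

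\emph{Main obstacle.} The delicate step is the converse of the cost formula in the non-Hamiltonian case: we must justify that every maximum $k$ comes from a linear forest and that $k=n$ forces every consecutive pair, including the wrap-around pair, to be an edge of $G$. I would prove this directly: a set of cycle edges all present in $G$ forms a subgraph of $C_n$, hence either all of $C_n$ or a disjoint union of paths. This also yields the paper's stated formula $m-n+2\,\pp(G)$ if $\pp$ denotes the minimum path-partition number. I would double-check that normalization, but it is not needed for NP-completeness.
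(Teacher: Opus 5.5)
Your proposal is correct and follows the paper's proof essentially step for step. It uses the same cost formula $n+m-2k$ for a labeling that realizes $k$ consecutive-pair edges, the same reduction from Hamiltonian Cycle with budget $m-n$ (met exactly when $k=n$), and the same labeling certificate for membership in NP. The paper also works out the full optimum $m-n+2p_{\min}(G)$ via spanning linear forests, which, as you note, is not needed for NP-completeness.
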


\noindent
The reduction is from Hamiltonian Cycle: a labeling $\pi:V(G)\to\Z_n$ carrying
$k$ edges of $G$ onto consecutive pairs has edit cost $n+m-2k$, so minimizing
the cost is maximizing the longest path realized in the cyclic order. The
completion problem is therefore hard even when the host group and its
connection set are given in advance, and all of the difficulty lies in the
labeling.

\begin{theoremB}[Degree bound, and the star]
For every connected $G$,
\[
  \gamma^{+}(G)\;\ge\;\frac{n\Delta^{*}}{2m}-1 ,
\]
computable in linear time from the degree sequence. Equality holds for the
star: $\gamma^{+}(K_{1,q})=(q-1)/2$, and $K_{1,q}$ is the unique maximizer of
both invariants among connected graphs on $n\le 7$ vertices.
\end{theoremB}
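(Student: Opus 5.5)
We would split Theorem~B into three parts: the general lower bound, the exact value on the star, and the uniqueness claim for $n\le 7$. For the lower bound, take an optimal completion $H\supseteq G$ on $V(G)$ with $H\cong\Cay(\Gamma,S)$ for an abelian group $\Gamma$ of order $n$. Every Cayley graph is $|S|$-regular, and $H$ contains $G$, so $|S|\ge\Delta$. The number of edges of $H$ is $n|S|/2$, so $n|S|$ is even. By the definition of $\Delta^{*}$ as the least $d\ge\Delta$ with $nd$ even, this gives $|S|\ge\Delta^{*}$. Hence the number of added edges is
\[
|E(H)|-m\;\ge\;\frac{n\Delta^{*}}{2}-m,
\]
and dividing by $m$ gives $\gamma^{+}(G)\ge n\Delta^{*}/(2m)-1$. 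The quantity $\Delta^{*}$ is either $\Delta$ or $\Delta+1$, according to the parity of $n\Delta$. So it is read off the degree sequence in linear time.

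For the star $K_{1,q}$ we have $n=q+1$, $m=q$ and $\Delta=q$. If $q$ is odd then $n$ is even, so $nq$ is even and $\Delta^{*}=q$. If $q$ is even then $n$ is odd and $nq$ is still even. In both cases $\Delta^{*}=q=n-1$, and the bound becomes $(q+1)q/(2q)-1=(q-1)/2$. Attainment is immediate: the only $(n-1)$-regular graph on $n$ vertices is $K_n$, and $K_n=\Cay(\Z_n,\Z_n\setminus\{0\})$, which contains the star. Completing the star to $K_n$ therefore costs exactly $\binom{n}{2}-q$ edges, matching the bound. The same observation explains why the star is the extreme case: $\Delta^{*}\le n-1$ always holds, and $m\ge n-1$ for connected $G$. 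For additions only, any connected $G$ satisfies $\gamma^{+}(G)\le \binom n2/m-1\le n/2-1=(q-1)/2$, with equality exactly when $m=n-1$ and the completion must be $K_n$. So for the additive invariant, the maximality of the star follows in general and is not just a census fact.

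The uniqueness statement for $n\le 7$, and the claim for $\gamma_{\triangle}$, we would not prove by hand. Here we would cite the exhaustive certified census of the $995$ connected graphs described later in the paper. The argument above shows that among trees with $m=n-1$, maximality of $\gamma^{+}$ requires every $(\Delta^{*})$-regular or sparser abelian Cayley supergraph to be unavailable. So the non-star trees, such as paths, which complete to $C_n$, must be checked individually to fall strictly below $(q-1)/2$. For paths this is immediate from $\gamma^{+}(P_n)=1/(n-1)$. For general trees one must exhibit a Cayley supergraph of degree less than $n-1$ whenever $\Delta<n-1$.

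That last step is the main obstacle. For the edit invariant $\gamma_{\triangle}$ no structural argument is offered here, and we would rely on the census certificates. For the additive case we would first try to embed a tree of maximum degree $\Delta\le n-2$ into $\Cay(\Z_n,S)$ with $|S|=n-2$ or $n-3$, i.e.\ $K_n$ minus a perfect matching or minus a cycle. A non-star tree has two non-adjacent non-leaf vertices, which should make room to delete the missing matching or cycle.
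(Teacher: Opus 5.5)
Your lower bound and star computation are correct and follow the paper. The parity step forcing $|S|\ge\Delta^{*}$ is the proof of Theorem~\ref{thm:degree}. The observation that $(q+1)q$ is always even, together with completion to $K_{q+1}$, is Theorem~\ref{thm:star}. Your estimate $\gamma^{+}(G)\le\binom{n}{2}/m-1\le(n-2)/2$ is the proof of Theorem~\ref{thm:starmax}.

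The gap is the uniqueness clause. Deferring to the census, as the paper does, cannot close it, because for $\gamma^{+}$ the clause is false and your own degree bound refutes it. Let $n$ be odd and let $T$ be a tree on $n$ vertices with $\Delta(T)=n-2$. Then $n\Delta$ is odd, so $\Delta^{*}=n-1$ and $\gamma^{+}(T)\ge n(n-1)/(2(n-1))-1=(n-2)/2=\gamma^{+}(K_{1,n-1})$, with equality by your upper bound. Two concrete instances:
\begin{itemize}
\item The fork on five vertices (edges $ca,cb,cd,de$) has a vertex of degree $3$. It therefore lies in neither $2$-regular circulant on $\Z_5$ and must be completed to $K_5$, giving $\gamma^{+}=(10-4)/4=3/2=\gamma^{+}(K_{1,4})$.
\item On seven vertices, the tree obtained from $K_{1,5}$ by attaching a pendant edge to one leaf has $\Delta=5$ and $\Delta^{*}=6$. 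Hence $\gamma^{+}=15/6=5/2=\gamma^{+}(K_{1,6})$.
\end{itemize}
So the star is not the unique maximizer of $\gamma^{+}$, whether the claim is read order by order or over the pooled census.

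This is exactly where your sketched route fails. A group of odd order has no involutions, so every host on it has even degree. A host of degree below $n-1$, such as $K_n$ minus a Hamiltonian cycle, therefore has degree at most $n-3<\Delta(T)$. The $(n-2)$-regular host $K_n$ minus a perfect matching exists only for even $n$. Your remark that a non-star tree has two non-adjacent non-leaf vertices is also false for double stars such as $P_4$ and the fork.

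What can legitimately rest on the census is the $\gamma_{\triangle}$ half. Via the cyclic host, the two trees above have $\gamma_{\triangle}\le 3/4$ and $\gamma_{\triangle}\le 7/6$, well below the star values $5/4$ and $3/2$. The $\gamma^{+}$ half must be weakened so that, at odd $n\ge5$, the trees with $\Delta=n-2$ are admitted as co-maximizers.
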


\noindent
The bound isolates the obstruction that irregularity alone imposes, and says
nothing else; in particular it is blind to whether the added edges can be
arranged into legitimate generator classes, which is precisely the difficulty
Theorem~A shows to be hard. It is attained on $89.4\%$ of our census, which
makes it usable as a triage filter.

Alongside these, Section~\ref{sec:struct} records the structural fact on which
the computations rest --- a labeling of $V(G)$ by $\Gamma$ sorts the edges
into inverse-pair classes, each class of a host contributing either a perfect
matching or a disjoint union of equal-length cycles, so the cost of a labeling
depends only on the class-count vector. Section~\ref{sec:families} determines
further families exactly, including $\gamma_{\triangle}(K_{1,q})=(2q-3)/q$,
whose limit is $2$ rather than the $3/2$ one might guess from the additive
case, and the path and grid values
$\gamma^{+}(P_n)=\gamma^{+}(P_n\square P_n)=1/(n-1)$.
Section~\ref{sec:census} reports an exhaustive certified census of all $995$
connected graphs on at most seven vertices, on which the two invariants
separate strictly $84.7\%$ of the time. Appendix~\ref{app:solver} describes a
solver for graphs beyond exhaustive range, and Appendix~\ref{app:distortion}
shows that edit count and bi-Lipschitz distortion of the completed host are
independent invariants, moving in opposite directions on stars and on paths.

\section{Related work}
\label{sec:related}

\paragraph{Isometric embedding into structured hosts.}
Which graphs embed isometrically into a hypercube goes back to Firsov
\cite{firsov1965}, with the characterization of partial cubes by a single
edge relation due to Djokovi\'c \cite{djokovic1973} and Winkler
\cite{winkler1984}; Ovchinnikov \cite{ovchinnikov2008} surveys the area. The
extension to Hamming graphs and products of complete graphs is due to Wilkeit
\cite{wilkeit1990}, with the structure theory of products developed by Graham
and Winkler \cite{graham-winkler1985} and surveyed in
\cite{imrich-klavzar2000,hammack2011}. Scale and $\ell_1$ embeddings are
treated by Shpectorov \cite{shpectorov1993} and in the monograph of Deza and
Laurent \cite{deza-laurent1997}. All of this literature enlarges the host to
achieve exactness; the present paper instead perturbs the graph.

\paragraph{The zero set and the classification programme.}
Both invariants vanish exactly when $G$ is already an abelian Cayley graph on
its own vertex set, so the zero set of $\gamma^{+}$ is the class of abelian
Cayley graphs itself. That class is the subject of an established programme:
Miklavi\v{c} and Poto\v{c}nik \cite{miklavic-potocnik2003} classified
distance-regular circulants and later posed the problem of characterizing
distance-regular Cayley graphs for arbitrary classes of groups
\cite{miklavic-potocnik2007}, with the abelian case under minimality
conditions treated by Miklavi\v{c} and \v{S}parl \cite{miklavic-sparl2014}
and the classification over $\Z_n\oplus\Z_p$ completed by Zhan, Huang and Lu
\cite{zhan-huang-lu2026}. Where that programme asks whether a graph belongs
to the zero set, the present paper assigns a numerical distance from it, and
Theorem~\ref{thm:degree} makes that distance computable in linear time from
below. The two questions meet on distance-regular graphs, where membership is
now decided for the groups covered by the classification, and
$\gamma^{+}$ measures the remainder.

\paragraph{Cayley graphs and algebraic structure.}
Background on Cayley graphs and vertex-transitive graphs is standard
\cite{godsil-royle2001}; that vertex-transitivity is strictly weaker than
being a Cayley graph is classical \cite{mckay-praeger1994}, and is one reason
the completion problem is not vacuous for regular graphs. The sum-free set
theory of Diananda and Yap \cite{diananda-yap1969}, Rhemtulla and Street
\cite{rhemtulla-street1970} and Green and Ruzsa \cite{green-ruzsa2005} is the
extremal input to the \emph{isometric} star obstruction of the companion
papers; no result of this paper uses it, and we record the connection rather
than claim it. The Smith normal form, which underlies the quotient labeling of
the companion papers, is surveyed in \cite{stanley2016}.

\paragraph{Census methodology.}
Our census enumerates the connected graphs on at most seven vertices in the
ordering of the Atlas of Graphs \cite{read-wilson1998}.

\section{Definitions}
\label{sec:def}

Throughout, $G$ is a finite connected simple graph with $n=|V(G)|\ge 2$
vertices, $m=|E(G)|$ edges and maximum degree $\Delta$. All groups are finite
and abelian, written additively. For $\Gamma$ of order $n$ and
$S=-S\subseteq\Gamma\setminus\{0\}$ generating $\Gamma$, the Cayley graph
$\Cay(\Gamma,S)$ has vertex set $\Gamma$ and edges $\{x,x+s\}$.

\begin{definition}\label{def:gamma}
Let $\pi:V(G)\to\Gamma$ be a bijection and write
$E_\pi(G)=\{\{\pi(u),\pi(v)\}:uv\in E(G)\}$. Set
\[
  \gamma^{+}(G)=\frac1m\min\bigl(|E(\Cay(\Gamma,S))|-m\bigr),
  \qquad
  \gamma_{\triangle}(G)=\frac1m\min\bigl|E_\pi(G)\,\triangle\,E(\Cay(\Gamma,S))\bigr|,
\]
the minima being over all abelian $\Gamma$ of order $n$, all bijections $\pi$
and all symmetric generating sets $S$, with the additional constraint
$E_\pi(G)\subseteq E(\Cay(\Gamma,S))$ in the first case. In both cases the
completed graph is the Cayley graph $\Cay(\Gamma,S)$ itself, not a graph
merely containing one: no edges outside the host are permitted, which is
what makes $\gamma^{+}$ finite and forces the completed graph to be
regular.
\end{definition}

Both quantities are normalized by $m$ so that they compare across graphs of
different sizes, and both vanish exactly when $G$ is itself an abelian Cayley
graph. Since a completion that only adds edges is in particular an edit,
$\gamma_{\triangle}\le\gamma^{+}$ always; Section~\ref{sec:census} shows the
inequality is strict for a large majority of graphs.

\begin{remark}
Both invariants are unchanged by translating $\pi$, since the multiset of
differences $\pi(v)-\pi(u)$ over $E(G)$ is unchanged. We may therefore always
normalize $\pi(v_0)=0$ at one chosen vertex, which reduces the search space
from $n!$ to $(n-1)!$ and is what makes the census of
Section~\ref{sec:census} exhaustive rather than heuristic.
\end{remark}

\section{The class decomposition}
\label{sec:struct}

For $c\in\Gamma\setminus\{0\}$ let $[c]=\{c,-c\}$ be its \emph{inverse-pair
class}, and let $\mathcal{C}(\Gamma)$ be the set of classes. A symmetric set
$S$ is a union of classes, so choosing $S$ is choosing a subset of
$\mathcal{C}(\Gamma)$.

\begin{lemma}[Class contribution]\label{lem:orbit}
Let $c\in\Gamma\setminus\{0\}$ and let $H_c$ be the spanning subgraph of
$\Cay(\Gamma,\{c,-c\})$. If $2c=0$ then $H_c$ is a perfect matching, with
$n/2$ edges. Otherwise $H_c$ is a disjoint union of $n/\mathrm{ord}(c)$
cycles, each of length $\mathrm{ord}(c)$, with $n$ edges in total. In both
cases we write $\mathrm{orb}(c)$ for the number of edges.
\end{lemma}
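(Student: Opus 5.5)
The plan is to analyse the graph $H_c=\Cay(\Gamma,\{c,-c\})$ on vertex set $\Gamma$ directly. Its edges are the pairs $\{x,x+c\}$ for $x\in\Gamma$; note that $\{x,x-c\}=\{x-c,(x-c)+c\}$ is already of this form. The natural tool is the cyclic subgroup $\langle c\rangle$ of order $k=\mathrm{ord}(c)$ and its $n/k$ cosets. Every edge joins two elements of the same coset $x+\langle c\rangle$, so $H_c$ is the disjoint union of its restrictions to the cosets. Each restriction is isomorphic, via $x+jc\mapsto j$, to $\Cay(\Z_k,\{\pm1\})$. The whole proof therefore reduces to describing $\Cay(\Z_k,\{\pm1\})$ and multiplying by the number of cosets.

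\emph{Case $2c=0$.} Here $k=2$ (as $c\ne0$) and $c=-c$. Each coset $\{x,x+c\}$ carries exactly one edge, so $H_c$ is a perfect matching with $n/2$ edges. I would remark that the map $x\mapsto\{x,x+c\}$ is two-to-one, since $x$ and $x+c$ give the same pair; this is what halves the count from $n$ to $n/2$.

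\emph{Case $2c\ne0$.} Here $k\ge3$. On the coset $x+\langle c\rangle$, the vertices $x,x+c,\dots,x+(k-1)c$ are distinct because $\mathrm{ord}(c)=k$. Consecutive vertices are adjacent, and $x+(k-1)c$ is adjacent to $x+kc=x$. Each vertex has exactly the two neighbours $\pm c$ away, and these are distinct since $2c\ne0$. So the restriction is a $2$-regular connected graph on $k\ge3$ vertices, which is the cycle $C_k$. We get $n/k$ cycles of length $\mathrm{ord}(c)$, each with $k$ edges, for $n$ edges in total. Equivalently, $H_c$ is $2$-regular on $n$ vertices, so it has $2n/2=n$ edges.

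There is no real obstacle; the only point needing care is the edge count, namely that the pairs $\{x,x+c\}$ are distinct for distinct $x$ exactly when $c\ne -c$. I would state this explicitly via the degree count: each vertex has degree $|\{c,-c\}|$, which is $1$ or $2$, and the handshake lemma gives $n/2$ or $n$ edges respectively.
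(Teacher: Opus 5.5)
Your proposal is correct and follows essentially the same route as the paper: decompose $H_c$ along the cosets of $\langle c\rangle$, and identify each restriction with $\Cay(\Z_{\mathrm{ord}(c)},\{\pm1\})$, which is a single edge when $\mathrm{ord}(c)=2$ and a cycle when $\mathrm{ord}(c)\ge3$; then count edges. Your explicit handshake count, $n\cdot|\{c,-c\}|/2$ edges, is just a more careful rendering of the paper's one-line ``counting edges'' step.
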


\begin{proof}
The connected components of $H_c$ are the cosets of $\langle c\rangle$, and
on each coset $H_c$ is the circulant $\Cay(\langle c\rangle,\{c,-c\})$, a
cycle of length $\mathrm{ord}(c)$ when $\mathrm{ord}(c)\ge 3$ and a single
edge when $\mathrm{ord}(c)=2$. Counting edges gives $n$ in the first case and
$n/2$ in the second.
\end{proof}

Consequently every generator class of an optimal host is either a perfect
matching, when the generator is an involution, or a $2$-factor all of whose
cycles have the same length. This is the structural reason the search space
collapses.

\begin{proposition}[Cost of a labeling]\label{prop:cost}
Fix $\Gamma$ and a bijection $\pi$, and for $c\in\mathcal{C}(\Gamma)$ let
$\chi(c)$ be the number of edges $uv\in E(G)$ with
$[\pi(v)-\pi(u)]=c$. Then for any $S\subseteq\mathcal{C}(\Gamma)$,
\[
  \bigl|E_\pi(G)\triangle E(\Cay(\Gamma,S))\bigr|
  \;=\;\sum_{c\in S}\bigl(\mathrm{orb}(c)-\chi(c)\bigr)\;+\;\sum_{c\notin S}\chi(c).
\]
The additive cost is obtained by restricting to those $S$ containing every
class with $\chi(c)>0$, in which case the second sum vanishes.
\end{proposition}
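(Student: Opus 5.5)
The plan is to partition both edge sets in the symmetric difference according to inverse-pair classes and count class by class. Since $\pi$ is a bijection onto $\Gamma$, each edge $uv\in E(G)$ yields a pair $\{\pi(u),\pi(v)\}$ of distinct group elements whose difference $\pi(v)-\pi(u)$ is nonzero. The class $[\pi(v)-\pi(u)]$ does not depend on the orientation of the edge, because reversing it negates the difference. So every unordered pair $\{x,y\}$ of distinct elements of $\Gamma$ has a well-defined class $[y-x]\in\mathcal{C}(\Gamma)$, and the set of all such pairs is partitioned by class. The pairs of class $c$ are exactly the edges of $H_c$, so Lemma~\ref{lem:orbit} says this part has $\mathrm{orb}(c)$ elements. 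We also have $E(\Cay(\Gamma,S))=\bigsqcup_{c\in S}E(H_c)$. Finally, since $\pi$ is injective, distinct edges of $G$ give distinct pairs, so $|E_\pi(G)\cap E(H_c)|=\chi(c)$.

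With this partition in hand, the symmetric difference splits as a disjoint union over $c\in\mathcal{C}(\Gamma)$ of the pieces $\bigl(E_\pi(G)\cap E(H_c)\bigr)\triangle\bigl(E(\Cay(\Gamma,S))\cap E(H_c)\bigr)$, and we count each piece separately. If $c\in S$, the host contains all of $E(H_c)$, so the piece is $E(H_c)\setminus E_\pi(G)$, which has size $\mathrm{orb}(c)-\chi(c)$. If $c\notin S$, the host meets $E(H_c)$ in nothing, so the piece is $E_\pi(G)\cap E(H_c)$, which has size $\chi(c)$. Summing over all classes gives the displayed formula.

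For the additive statement, the constraint $E_\pi(G)\subseteq E(\Cay(\Gamma,S))$ holds exactly when no edge of $G$ lands in a class outside $S$, that is, when $\chi(c)=0$ for every $c\notin S$. Under this condition the second sum vanishes. The remaining sum $\sum_{c\in S}(\mathrm{orb}(c)-\chi(c))$ then equals $|E(\Cay(\Gamma,S))|-m$, which is the quantity minimized in Definition~\ref{def:gamma}.

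The only point that needs care is checking that the map from edges to classes is well defined and that $H_c$ captures every pair of class $c$ exactly once, including the involution case $2c=0$, where $\{x,x+c\}=\{x+c,x\}$ and $c=-c$. This is exactly the content of Lemma~\ref{lem:orbit}'s edge count, so I expect no real obstacle. The generation requirement on $S$ plays no role in the identity itself and only restricts which $S$ enter the minimum.
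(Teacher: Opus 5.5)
Your proposal is correct and follows the paper's proof: both split the symmetric difference class by class, count $\mathrm{orb}(c)-\chi(c)$ additions for $c\in S$ and $\chi(c)$ deletions for $c\notin S$, and sum using the disjointness of the classes. You spell out details the paper leaves implicit, namely that the class of an edge is well defined and that injectivity of $\pi$ gives $|E_\pi(G)\cap E(H_c)|=\chi(c)$, but the argument is the same.
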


\begin{proof}
Edges of $\Cay(\Gamma,S)$ in class $c\in S$ number $\mathrm{orb}(c)$ by
Lemma~\ref{lem:orbit}, of which $\chi(c)$ already lie in $E_\pi(G)$; these
contribute $\mathrm{orb}(c)-\chi(c)$ additions. Edges of $E_\pi(G)$ in a
class outside $S$ must be deleted, contributing $\chi(c)$. Distinct classes
are disjoint, so the counts add.
\end{proof}

The cost therefore depends on $\pi$ only through the vector
$(\chi(c))_{c\in\mathcal{C}(\Gamma)}$, and on $S$ only through a subset
choice constrained by the requirement that $S$ generate $\Gamma$. Since
$|\mathcal{C}(\Gamma)|$ is small for small $n$ --- three classes for
$n\in\{6,7\}$ --- all admissible $S$ can be enumerated, and the only
remaining search is over labelings.

\section{The degree lower bound}
\label{sec:lower}

\begin{theorem}[Degree bound]\label{thm:degree}
For every connected $G$,
\begin{equation}\label{eq:degree}
  \gamma^{+}(G)\;\ge\;\frac{n\Delta^{*}}{2m}-1 ,
\end{equation}
where $\Delta^{*}$ is the least integer $d\ge\Delta$ for which $nd$ is even.
\end{theorem}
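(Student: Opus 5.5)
The plan is to bound the number of edges of any admissible host from below. Fix an abelian $\Gamma$ of order $n$, a bijection $\pi$ and a symmetric generating set $S$ with $E_\pi(G)\subseteq E(\Cay(\Gamma,S))$. By Definition~\ref{def:gamma} the quantity to control is $|E(\Cay(\Gamma,S))|-m$, and it suffices to prove $|E(\Cay(\Gamma,S))|\ge n\Delta^{*}/2$ for every such triple. Dividing by $m$ and minimizing then gives \eqref{eq:degree} directly.

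The first step is regularity. The graph $\Cay(\Gamma,S)$ is $|S|$-regular, because each vertex $x$ has the $|S|$ distinct neighbours $x+s$. These are distinct since $0\notin S$ and translation is injective. So the host has exactly $n|S|/2$ edges. Alternatively, Lemma~\ref{lem:orbit} and Proposition~\ref{prop:cost} give $\sum_{c\in S}\mathrm{orb}(c)$, which is $n$ per non-involution class and $n/2$ per involution class, again $n|S|/2$.

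The second step is containment. Since $\pi$ is a bijection and $E_\pi(G)\subseteq E(\Cay(\Gamma,S))$, a vertex $v$ of degree $\Delta$ in $G$ is sent to a vertex $\pi(v)$ of host degree at least $\Delta$. Hence $|S|\ge\Delta$.

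The third step, the parity refinement, is the only point that needs care. The handshake count $n|S|/2$ must be an integer, so $n|S|$ is even. Together with $|S|\ge\Delta$, the minimality in the definition of $\Delta^{*}$ gives $|S|\ge\Delta^{*}$. Concretely, if $n$ is even then $\Delta^{*}=\Delta$ and nothing more is needed. If $n$ is odd then $|S|$ is even, and $\Delta^{*}$ is $\Delta$ rounded up to the next even integer.

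In the odd case one can also see the parity group-theoretically. An abelian group of odd order has no involutions, so every class $[c]$ has two elements and $|S|$ is even. Either argument suffices.

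Combining the steps, $|E(\Cay(\Gamma,S))|-m\ge n\Delta^{*}/2-m$ for every admissible $(\Gamma,\pi,S)$. Taking the minimum and dividing by $m$ yields \eqref{eq:degree}.

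Linear-time computability follows because $\Delta$ is read from the degree sequence in one pass, and $\Delta^{*}$ then needs one parity check. No step here is a real obstacle. The only subtlety is to state clearly that the completed graph must be the Cayley graph itself, as stressed after Definition~\ref{def:gamma}, so that the regularity count applies.
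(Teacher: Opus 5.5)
Your proposal is correct and follows essentially the same route as the paper. The host $\Cay(\Gamma,S)$ is $|S|$-regular, containment of $G$ forces $|S|\ge\Delta$, and the handshake parity of $n|S|$ upgrades this to $|S|\ge\Delta^{*}$, giving at least $n\Delta^{*}/2-m$ added edges; your group-theoretic parity check (no involutions in odd order, so $|S|$ is even) is a valid but unnecessary alternative to the handshake argument.
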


\begin{proof}
An abelian Cayley graph is regular, of degree $|S|$. A completion by
additions only cannot decrease any vertex degree, so the completed graph is
regular of some degree $d\ge\Delta$. A $d$-regular graph on $n$ vertices
exists only if $nd$ is even, so in fact $d\ge\Delta^{*}$ and the completed
graph has at least $n\Delta^{*}/2$ edges. The number of added edges is at
least $n\Delta^{*}/2-m$, and dividing by $m$ gives the claim.
\end{proof}

The bound is computable in linear time from the degree sequence alone, which
makes it usable as a triage filter: it certifies that certain graphs are far
from any abelian Cayley graph without running any search. Its practical
quality is examined in Section~\ref{sec:census}, where it is attained on
$89.4\%$ of the census and correlates with $\gamma^{+}$, the invariant it
bounds, at $r=0.851$.

The bound is sharp in a strong sense: it is exactly the obstruction that
irregularity imposes, and it says nothing else. In particular it is blind to
whether the added edges can be arranged into legitimate generator classes,
which is precisely the difficulty that makes the problem hard
(Section~\ref{sec:hard}).

\subsection{Characterizing equality}

The bound is not merely a numerical inequality; equality is an exact
structural condition.

\begin{theorem}[Equality in the degree bound]\label{thm:equality}
For a connected graph $G$,
\[
  \gamma^{+}(G)=\frac{n\Delta^{*}}{2m}-1
\]
if and only if $G$ is a spanning subgraph of some $\Delta^{*}$-regular abelian
Cayley graph on $V(G)$.
\end{theorem}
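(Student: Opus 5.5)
The plan is to unwind Definition~\ref{def:gamma} and reuse the counting argument from the proof of Theorem~\ref{thm:degree}. The key reformulation is that $m\gamma^{+}(G)$ is the minimum of $|E(\Cay(\Gamma,S))|-m$ over all triples $(\Gamma,\pi,S)$ with $E_\pi(G)\subseteq E(\Cay(\Gamma,S))$. Such a triple is exactly an identification of $G$, via $\pi$, with a spanning subgraph of the abelian Cayley graph $\Cay(\Gamma,S)$ on $V(G)$. That host is $|S|$-regular and so has $n|S|/2$ edges. Hence
\[
  m\gamma^{+}(G)=\min\Bigl\{\tfrac{n|S|}{2}\Bigr\}-m ,
\]
where the minimum runs over admissible triples. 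The whole statement then reduces to showing that this minimum equals $n\Delta^{*}/2$ if and only if some admissible triple has $|S|=\Delta^{*}$.

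For the direction ($\Leftarrow$), suppose $G$ is a spanning subgraph of a $\Delta^{*}$-regular abelian Cayley graph on $V(G)$. Transport this host to a triple $(\Gamma,\pi,S)$ with $|S|=\Delta^{*}$ and $E_\pi(G)\subseteq E(\Cay(\Gamma,S))$. This triple gives $\gamma^{+}(G)\le n\Delta^{*}/(2m)-1$. Combined with Theorem~\ref{thm:degree}, it gives equality.

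For the direction ($\Rightarrow$), take an optimal triple, which exists since the minimum is over a finite nonempty set. For example, $\Gamma=\Z_n$ with $S=\Gamma\setminus\{0\}$ gives $K_n$, which contains every labeling. As in the proof of Theorem~\ref{thm:degree}, we have $d:=|S|\ge\Delta$ because adding edges never lowers degrees, and $nd$ is even. Hence $d\ge\Delta^{*}$ by the minimality in the definition of $\Delta^{*}$. Equality gives $nd/2=n\Delta^{*}/2$, so $d=\Delta^{*}$. The optimal host is therefore a $\Delta^{*}$-regular abelian Cayley graph containing $\pi(G)$ as a spanning subgraph, and pulling it back along $\pi$ gives the required graph on $V(G)$.

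The only point needing care is interpretive rather than technical. We must make explicit that ``abelian Cayley graph on $V(G)$'' means a graph on $V(G)$ isomorphic, via a bijection with some abelian $\Gamma$ of order $n$, to $\Cay(\Gamma,S)$ with $S=-S$ generating $\Gamma$. This matches exactly the triples allowed in Definition~\ref{def:gamma}, including the requirement that $S$ generates $\Gamma$, so that connectivity is built in on both sides. Once this identification is stated, both directions are one line each. I expect no real obstacle beyond noting that the existence of the minimizer and the parity step $d\ge\Delta^{*}$ are inherited verbatim from Theorem~\ref{thm:degree}.
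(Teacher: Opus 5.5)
Your proposal is correct and follows the paper's proof essentially step for step. The reverse direction combines the explicit completion with Theorem~\ref{thm:degree}, and the forward direction takes an optimal host of degree $d$, gets $d\ge\Delta^{*}$ from regularity and parity, and concludes $d=\Delta^{*}$ from equality. Your remark that a minimizer exists, for instance via $K_n$, makes explicit a step the paper leaves implicit.
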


\begin{proof}
Suppose first that equality holds, and let $H=\Cay(\Gamma,S)$ be an optimal
additive completion, with $d=|S|$. Since $G\subseteq H$ and $H$ is
$d$-regular, every vertex of $G$ has degree at most $d$, so $d\ge\Delta$; and
$nd$ is even because $H$ is a $d$-regular graph on $n$ vertices, so
$d\ge\Delta^{*}$. The number of added edges is $nd/2-m\ge n\Delta^{*}/2-m$,
so $\gamma^{+}(G)\ge nd/(2m)-1\ge n\Delta^{*}/(2m)-1$. Equality in the outer
terms forces $d=\Delta^{*}$, so $H$ is a $\Delta^{*}$-regular abelian Cayley
graph containing $G$ as a spanning subgraph.

Conversely, if $G$ is a spanning subgraph of a $\Delta^{*}$-regular abelian
Cayley graph $H$ on $V(G)$, then adding the $n\Delta^{*}/2-m$ edges of
$E(H)\setminus E(G)$ is a valid completion, so
$\gamma^{+}(G)\le n\Delta^{*}/(2m)-1$; Theorem~\ref{thm:degree} supplies the
reverse inequality.
\end{proof}

\begin{remark}\label{rem:parity}
When $n\Delta$ is even we have $\Delta^{*}=\Delta$ and \eqref{eq:degree} is
the plain degree bound. When $n\Delta$ is odd --- that is, when $n$ and
$\Delta$ are both odd --- no $\Delta$-regular graph on $n$ vertices exists, so
the host is regular of degree at least $\Delta+1=\Delta^{*}$ and
Theorem~\ref{thm:equality} is vacuous for $\Delta$ itself. The gain over the
rounded form $\lceil n\Delta/2\rceil/m-1$ is genuine rather than cosmetic:
$n(\Delta+1)/2-\lceil n\Delta/2\rceil=(n-1)/2>0$ for every $n\ge3$, so
\eqref{eq:degree} is strictly stronger in exactly that case.
The parity correction is not cosmetic on the census either. Under the plain
form $n\Delta/2m-1$ the bound is attained on $476$ of the $995$ graphs
($47.8\%$); under \eqref{eq:degree} it is attained on $890$ ($89.4\%$). The
entire gain falls, as it must, on the graphs with $n$ and $\Delta$ both odd:
attainment at $n=5$ rises from $61.9\%$ to $100\%$ and at $n=7$ from $42.1\%$
to $89.7\%$, while the even orders $n=4$ and $n=6$ are unchanged at $100\%$
and $84.8\%$.
\end{remark}

Theorem~\ref{thm:equality} explains the census figure of
Section~\ref{sec:census}: the graphs attaining the bound are exactly those
that sit inside an abelian Cayley graph whose degree already matches their
own maximum degree. It also shows that the condition is not one that can be
read off the degree sequence, so the linear-time bound and the exact value
part company precisely here. Deciding membership in this class is
Problem~\ref{prob:equality}.

\section{Exactly determined families}
\label{sec:families}

\subsection{Stars}

The star is the extreme case of irregularity and, as in the isometric theory
\cite{fokam-p2}, it is the family on which the invariants are largest.

\begin{theorem}[Star obstruction]\label{thm:star}
For $q\ge 2$,
\[
  \gamma^{+}(K_{1,q})=\frac{q-1}{2},
  \qquad
  \gamma_{\triangle}(K_{1,q})\le\frac{2q-3}{q},
\]
with equality in the second for every $q\ge 2$; at $q=2$ the value is
$\gamma_{\triangle}(K_{1,2})=\gamma_{\triangle}(P_3)=1/2$. In particular
$\gamma_{\triangle}(K_{1,q})\to 2$, and not $3/2$.
\end{theorem}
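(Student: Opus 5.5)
The plan is to exploit the fact that the star leaves almost no freedom in the labeling. Write $n=q+1$ and $m=q$. By the translation remark after Definition~\ref{def:gamma}, we may normalize $\pi(\text{centre})=0$. The $q$ leaves are then labeled bijectively by the $q=n-1$ nonzero elements of $\Gamma$. So the difference multiset over $E(K_{1,q})$ is exactly $\Gamma\setminus\{0\}$, whatever $\pi$ is, and hence $\chi(c)=|[c]|$ for every class $c\in\mathcal{C}(\Gamma)$. In words, each class is hit once if it is an involution and twice otherwise. This is the key observation: the search over labelings disappears, and only the choice of $\Gamma$ and $S$ remains.

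For the additive invariant, I would argue directly. The centre has degree $q=n-1$, and any host containing $G$ is $|S|$-regular. Hence $|S|=n-1$, so $S=\Gamma\setminus\{0\}$ and the host is $K_n$. This host exists for every $\Gamma$, so
\[
\gamma^{+}(K_{1,q})=\frac{1}{q}\Bigl(\tfrac{q(q+1)}{2}-q\Bigr)=\frac{q-1}{2}.
\]
This is also the equality case of Theorem~\ref{thm:degree}, since $n\Delta=q(q+1)$ is even and therefore $\Delta^{*}=q$.

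For the edit invariant, I would substitute $\chi(c)=|[c]|$ into Proposition~\ref{prop:cost}. By Lemma~\ref{lem:orbit}, $\mathrm{orb}(c)=n|[c]|/2$ in both the involution and the non-involution case. Writing $d=|S|=\sum_{c\in S}|[c]|$, the cost of any admissible $(\Gamma,\pi,S)$ becomes
\[
\sum_{c\in S}\Bigl(\tfrac{n}{2}-1\Bigr)|[c]|+\sum_{c\notin S}|[c]|
=d\Bigl(\tfrac n2-1\Bigr)+(q-d)
=q+\frac{d(n-4)}{2}=q+\frac{d(q-3)}{2}.
\]
It remains to minimize over the values of $d$ that can occur for a symmetric generating set of an abelian group of order $n$.
\begin{itemize}
\item For $n\ge3$ we have $d\ge2$, because a single involution generates only a group of order $2$.
\item The value $d=2$ is achieved by $\Gamma=\Z_n$ with $S=\{\pm1\}$, whose host is the cycle $C_n$.
\end{itemize}
If $q\ge4$, the coefficient $(q-3)/2$ is positive, so the minimum is at $d=2$, giving cost $2q-3$. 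If $q=3$, the cost equals $3=2q-3$ for every $d$. If $q=2$ (so $n=3$), the only possibility is $d=2$, and the cost is $2-1=1=2q-3$. In every case
\[
\gamma_{\triangle}(K_{1,q})=\frac{2q-3}{q},
\]
with value $1/2$ at $q=2$, and this tends to $2$ as $q\to\infty$.

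The step needing the most care is the lower bound over all $S$. The danger is overlooking hosts with involution classes, or non-cyclic $\Gamma$. Both are handled uniformly by the identity $\mathrm{orb}(c)=n|[c]|/2$, which makes the cost depend on $S$ only through $d$. After that, the only remaining point is to check the small cases $q\le3$, where the sign of $n-4$ changes.
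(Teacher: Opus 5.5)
Your proof is correct and follows the paper's route. The additive value comes from the centre's degree forcing the host to be $K_{q+1}$, and the edit value from writing the cost as $q+|S|(q-3)/2$ and minimizing at $|S|=2$ with the cycle $C_{q+1}$ as witness. The paper reaches the same cost formula by bounding the number of surviving star edges at the centre by $|S|$, whereas you obtain it as an exact, labeling-independent identity from $\chi(c)=|[c]|$ and Proposition~\ref{prop:cost}. Your separate treatment of $q=2$, where $(q-3)/2<0$ but $|S|=2$ is forced, is also more careful than the paper's text, which asserts that this coefficient is non-negative for all $q\ge 2$.
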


\begin{proof}
Here $n=q+1$, $m=q$ and $\Delta=q$. Theorem~\ref{thm:degree} gives
$\gamma^{+}\ge (q+1)q/(2q)-1=(q-1)/2$. For the matching upper bound take
$\Gamma=\Z_{q+1}$ and $S=\Gamma\setminus\{0\}$, that is, the complete graph
$K_{q+1}=\Cay(\Z_{q+1},\Z_{q+1}\setminus\{0\})$, which contains every graph
on $q+1$ vertices. The completion adds
\[
  \binom{q+1}{2}-q=\frac{q(q+1)}{2}-q=\frac{q(q-1)}{2}
\]
edges, and dividing by $m=q$ gives exactly $(q-1)/2$. Since this meets the
lower bound, the complete graph is an optimal additive completion, not a
wasteful one. For the edit bound, delete one pendant edge and complete the
remaining $K_{1,q-1}$ together with the isolated vertex inside $K_{q+1}$;
we argue exactly, in both directions.

Let $\Cay(\Gamma,S)$ be any host of order $n=q+1$ and put $s=|S|$. Every
edge of $K_{1,q}$ meets the centre, and the centre has degree $s$ in the
host, so at most $k\le\min(q,s)$ star edges survive. The host has $ns/2$
edges, so
\[
  \bigl|E_\pi(G)\triangle E(\Cay(\Gamma,S))\bigr|
  =\underbrace{(q-k)}_{\text{deletions}}+\underbrace{\Bigl(\tfrac{ns}{2}-k\Bigr)}_{\text{additions}}
  =q+\frac{(q+1)s}{2}-2k .
\]
Since $S\subseteq\Gamma\setminus\{0\}$ we have $s\le n-1=q$, so
$k\le\min(q,s)=s$ and therefore $-2k\ge-2s$, giving
\[
  \bigl|E_\pi(G)\triangle E(\Cay(\Gamma,S))\bigr|
  \;\ge\;q+\frac{(q+1)s}{2}-2s\;=\;q+\frac{s(q-3)}{2}.
\]
Connectedness of the host forces $s\ge2$, and for $q\ge2$ the coefficient
$(q-3)/2$ is non-negative, so the right-hand side is minimised at $s=2$,
where it equals $q+(q-3)=2q-3$. Equality is attained by
$\Gamma=\Z_{q+1}$, $S=\{\pm1\}$, that is by the cycle $C_{q+1}$: place the
centre anywhere, keep the two star edges joining it to its two cycle
neighbours, delete the other $q-2$ star edges and add the remaining $q-1$
cycle edges, for $(q-2)+(q-1)=2q-3$ edits. Dividing by $m=q$ gives
$\gamma_{\triangle}(K_{1,q})=(2q-3)/q$.
\end{proof}

The optimal hosts found by the exact procedure of Section~\ref{sec:census}
are indeed cycles for every $3\le q\le 8$, which is a useful consistency
check on the implementation.

The contrast with the additive case is the substantive point: $\gamma^{+}$
grows without bound on stars, whereas $\gamma_{\triangle}$ stays below $2$.
Allowing deletions therefore does not merely improve the constant, it
changes an unbounded quantity into a bounded one. The star is the unique
global maximizer of $\gamma_{\triangle}$ in our census.

\subsection{Paths and grids}

\begin{proposition}\label{prop:path}
$\gamma^{+}(P_n)=\dfrac1{n-1}$ and
$\gamma^{+}(P_n\square P_n)=\dfrac1{n-1}$.
\end{proposition}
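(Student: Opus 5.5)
Both identities come from Theorem~\ref{thm:equality}: in each case I would exhibit a $\Delta^{*}$-regular abelian Cayley graph on $V(G)$ that contains $G$ as a spanning subgraph, and then check that the right-hand side of \eqref{eq:degree} equals $1/(n-1)$. The exact value then follows with no search over labelings.

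For the path, take $n\ge3$, so that $m=n-1$ and $\Delta=2$. Since $n\Delta=2n$ is even, $\Delta^{*}=2$. Substituting into \eqref{eq:degree} gives
\[
  \gamma^{+}(P_n)\;\ge\;\frac{2n}{2(n-1)}-1\;=\;\frac{1}{n-1}.
\]
For the matching construction, label the path $v_0v_1\cdots v_{n-1}$ by $\pi(v_i)=i\in\Z_n$ and take $S=\{\pm1\}$. This $S$ generates $\Z_n$, and for $n\ge3$ the graph $\Cay(\Z_n,\{\pm1\})=C_n$ is $2$-regular. Every path edge $v_iv_{i+1}$ has difference $1$, so $P_n\subseteq C_n$. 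The completion adds exactly the one edge $v_{n-1}v_0$, giving cost $1/(n-1)$. Equality then holds by Theorem~\ref{thm:equality}, or directly because the construction meets the lower bound.

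For the grid, again take $n\ge3$. Then $P_n\square P_n$ has $N=n^2$ vertices, $m=2n(n-1)$ edges and $\Delta=4$, so $\Delta^{*}=4$. The bound becomes
\[
  \frac{4n^2}{2\cdot 2n(n-1)}-1\;=\;\frac{n}{n-1}-1\;=\;\frac{1}{n-1}.
\]
For the host, label vertex $(i,j)$ by $(i,j)\in\Z_n\oplus\Z_n$ and take $S=\{\pm e_1,\pm e_2\}$. This $S$ generates the group. For $n\ge3$ the four elements are distinct and none is an involution, so $\Cay(\Z_n^2,S)=C_n\square C_n$ is the $4$-regular torus. Grid edges have difference $\pm e_1$ or $\pm e_2$, so the grid embeds as a spanning subgraph. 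The completion adds the $2n$ wrap-around edges, and indeed $2n^2-2n(n-1)=2n$, so the cost is $2n/(2n(n-1))=1/(n-1)$.

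The only real obstacle is the range of $n$; the algebra above is routine. For $n=2$ the formula fails: $P_2=K_2=\Cay(\Z_2,\{1\})$ and $P_2\square P_2=C_4$ are already abelian Cayley graphs, so $\gamma^{+}=0\neq1$. For $n=2$ the torus construction also breaks down, because $e_1=-e_1$ in $\Z_2^2$ and the torus degenerates. I would therefore state and prove the proposition for $n\ge3$ and record the $n=2$ exception explicitly.
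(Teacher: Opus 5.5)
Your proof is correct and is essentially the paper's: the cycle and torus hosts give the upper bounds, and the degree bound gives the matching lower bounds (the paper uses it for the grid, and for the path instead notes that $P_n$ is not regular, so at least one edge must be added). Your caveat at $n=2$ is also right and corrects the statement. Since $P_2=K_2$ and $P_2\square P_2=C_4$ are already abelian Cayley graphs, $\gamma^{+}=0\neq 1$ there, so the proposition as written needs $n\ge3$.
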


\begin{proof}
For the path, $m=n-1$ and adding the single edge joining the two endpoints
produces $C_n=\Cay(\Z_n,\{\pm1\})$, so $\gamma^{+}\le 1/(n-1)$; and
$\gamma^{+}>0$ because $P_n$ is not regular. For the grid, adding the $n$
wrap-around edges in each of the two directions produces the torus
$C_n\square C_n=\Cay(\Z_n\times\Z_n,\{\pm e_1,\pm e_2\})$. The grid has
$m=2n(n-1)$ edges and $2n$ edges are added, giving $1/(n-1)$ again.
Optimality follows from Theorem~\ref{thm:degree}, which for the grid gives
exactly $1/(n-1)$.
\end{proof}

That the two families share a value is a coincidence of normalization rather
than a structural identity, but it is a useful sanity check on any
implementation: both are reproduced exactly by the solver of
Section~\ref{sec:algo}, as are the values $0.2000$ for the $6\times6$ grid
and $0.1111$ for the $10\times 10$ grid.

\subsection{Graphs with vanishing invariant}

$\gamma_{\triangle}(G)=0$ if and only if $G$ is an abelian Cayley graph, and
in particular $G$ must be regular. Among the $995$ connected graphs on at
most seven vertices, exactly $14$ satisfy this, all of them regular, and
every one of them is a circulant.

\section{Universal upper bounds and the extremal graph}
\label{sec:upper}

Theorem~\ref{thm:degree} bounds $\gamma^{+}$ from below. We now bound it from
above, uniformly over all connected graphs, which settles the extremal
question for $\gamma^{+}$ and separates the two invariants asymptotically.

\begin{lemma}\label{lem:trivial}
For every connected graph $G$,
\[
  \gamma^{+}(G)\;\le\;\frac{n(n-1)}{2m}-1
  \qquad\text{and}\qquad
  \gamma_{\triangle}(G)\;\le\;1+\frac{n}{m}.
\]
\end{lemma}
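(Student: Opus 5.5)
Both inequalities come from exhibiting one explicit host and labeling and then bounding the cost, using Proposition~\ref{prop:cost} or a direct edge count. For the additive bound, take $\Gamma=\Z_n$, $S=\Gamma\setminus\{0\}$ and any bijection $\pi$, as in the proof of Theorem~\ref{thm:star}. Then $\Cay(\Gamma,S)=K_n$, which is connected and generated (since $n\ge 2$) and contains $E_\pi(G)$ whatever $\pi$ is. So it is an admissible additive completion. The number of added edges is $\binom n2-m$, and dividing by $m$ gives $\gamma^{+}(G)\le n(n-1)/(2m)-1$. Nothing more is needed, since $\gamma^{+}$ is a minimum over admissible hosts.

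For the edit bound, take the cycle host $\Gamma=\Z_n$, $S=\{\pm1\}$, so the host is $C_n$ with $n$ edges when $n\ge3$. The degenerate case $n=2$ is handled separately: then $G=K_2$ is already $\Cay(\Z_2,\{1\})$ and the cost is $0$. For any labeling $\pi$, let $k$ be the number of edges of $G$ sent onto consecutive pairs. Counting as in the star proof, the cost is $(m-k)+(n-k)=n+m-2k\le n+m$, so $\gamma_{\triangle}(G)\le (n+m)/m=1+n/m$.

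To do slightly better and keep the statement clean, I would take $\pi$ to follow a longest path of $G$ (or a DFS ordering). Then $k\ge 1$ whenever $m\ge1$, which gives a strict improvement, though it is not required for the stated inequality. When $n=2$ the host is $\Cay(\Z_2,\{1\})=K_2$ with one edge rather than $n$, and the cost bound $n+m$ still holds trivially. The check that $\{\pm1\}$ generates $\Z_n$ is immediate.

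There is no real obstacle here. The only points needing care are admissibility of the host, namely that $S$ is symmetric, excludes $0$ and generates $\Gamma$, and the small cases $n=2$ (and $n=3$, where $C_3=K_3$), where $\mathrm{orb}(1)$ is $n/2$ or $n$ by Lemma~\ref{lem:orbit}. In every case the host edge count is at most $n$, so the edit bound goes through uniformly.
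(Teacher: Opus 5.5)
Your proof is correct and matches the paper's: complete $G$ to $K_n=\Cay(\Z_n,\Z_n\setminus\{0\})$ for the additive bound, and edit into the cycle $\Cay(\Z_n,\{\pm1\})$ at cost $n+m-2k\le n+m$ for the edit bound. Your separate check of $n=2$, where the cycle host degenerates to a single edge, is extra care the paper skips, and the bound holds there as well.
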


\begin{proof}
The complete graph is $K_n=\Cay(\Z_n,\Z_n\setminus\{0\})$, so completing $G$
to $K_n$ by adding the $\binom{n}{2}-m$ missing edges is always a valid
additive completion; dividing by $m$ gives the first bound. For the second,
the cycle $C_n=\Cay(\Z_n,\{\pm1\})$ is an abelian Cayley graph on $n$
vertices, and editing $G$ into any fixed labelling of $C_n$ costs at most
$m+n$ edits, namely deleting every edge of $G$ and adding every edge of
$C_n$.
\end{proof}

The two bounds behave very differently, and this is the sharpest structural
difference between the invariants.

\begin{corollary}\label{cor:bounded}
$\gamma_{\triangle}(G)\le 1+n/(n-1)\le 3$ for every connected $G$, and
$\gamma_{\triangle}(G)\le 2$ whenever $G$ is not a tree. By contrast
$\gamma^{+}$ is unbounded: $\gamma^{+}(K_{1,q})=(q-1)/2\to\infty$.
\end{corollary}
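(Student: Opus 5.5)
The plan is to read all three assertions of Corollary~\ref{cor:bounded} off Lemma~\ref{lem:trivial} and Theorem~\ref{thm:star}. The only external input is that a connected graph satisfies $m\ge n-1$, with equality exactly when $G$ is a tree.

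\emph{First assertion.} Lemma~\ref{lem:trivial} gives $\gamma_{\triangle}(G)\le 1+n/m$. Connectedness gives $m\ge n-1$, so $n/m\le n/(n-1)$, and hence $\gamma_{\triangle}(G)\le 1+n/(n-1)$. For $n\ge2$ the function $n/(n-1)$ is decreasing and equals $2$ at $n=2$, which gives the absolute bound $3$. The value $3$ would only be approached at $n=2$, where $G=K_2$ is itself a Cayley graph, so the bound is crude but valid.

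\emph{Second assertion.} If $G$ is connected and not a tree, it contains a cycle, so $m\ge n$. Then $n/m\le1$, and Lemma~\ref{lem:trivial} yields $\gamma_{\triangle}(G)\le2$.

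\emph{Third assertion.} Theorem~\ref{thm:star} gives $\gamma^{+}(K_{1,q})=(q-1)/2$, and this tends to $\infty$ as $q\to\infty$. So no absolute constant bounds $\gamma^{+}$ over all connected graphs.

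\emph{Possible obstacle.} None of these steps should be difficult; the only point needing care is the edit bound in Lemma~\ref{lem:trivial}. Deleting all $m$ edges of $G$ and adding all $n$ edges of $C_n$ overcounts edges shared between the two graphs, but an overcount only strengthens an upper bound, so the cost is at most $m+n$ and $1+n/m$ remains valid. I would also check the degenerate case $n=2$, where $C_2$ is not a simple cycle: there $G=K_2$ has $\gamma_{\triangle}=0$ directly, so the bound holds trivially.
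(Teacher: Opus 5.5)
Your proposal is correct and is the paper's own argument: it combines the edit bound $\gamma_{\triangle}\le 1+n/m$ of Lemma~\ref{lem:trivial} with $m\ge n-1$ for connected graphs and $m\ge n$ for non-trees, and reads the unboundedness of $\gamma^{+}$ directly off the star value in Theorem~\ref{thm:star}. Your separate check at $n=2$ is harmless but not needed, since $\Cay(\Z_2,\{1\})=K_2$ has only one edge and the lemma's count of at most $m+n$ edits still applies.
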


\begin{proof}
Connectivity gives $m\ge n-1$, and $1+n/m$ is decreasing in $m$; a graph that
is not a tree has $m\ge n$. The values for stars are
Theorem~\ref{thm:star}.
\end{proof}

So allowing deletions does not merely lower the cost, it changes the order of
growth: the edit distance is bounded by an absolute constant while the
completion number is not. Any graph, however irregular, is within a bounded
number of edits per edge of an abelian Cayley graph; the cost of insisting on
additions alone is unbounded.

\subsection{The star is extremal for $\gamma^{+}$}

\begin{theorem}[The star maximizes $\gamma^{+}$]\label{thm:starmax}
For every connected graph $G$ on $n\ge 3$ vertices,
\[
  \gamma^{+}(G)\;\le\;\frac{n-2}{2}\;=\;\gamma^{+}(K_{1,n-1}),
\]
so the star attains the maximum of $\gamma^{+}$ over all connected graphs on
$n$ vertices.
\end{theorem}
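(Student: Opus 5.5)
Combining Lemma~\ref{lem:trivial} with the lower bound $m\ge n-1$ from connectivity settles the inequality, and Theorem~\ref{thm:star} supplies the matching value on the star. The plan has three steps.

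\textbf{Step 1 (upper bound).} The first bound of Lemma~\ref{lem:trivial}, obtained by completing to $K_n=\Cay(\Z_n,\Z_n\setminus\{0\})$, gives
\[
  \gamma^{+}(G)\;\le\;\frac{n(n-1)}{2m}-1 .
\]
The right-hand side is decreasing in $m$. Since $G$ is connected, $m\ge n-1$, so
\[
  \gamma^{+}(G)\;\le\;\frac{n(n-1)}{2(n-1)}-1\;=\;\frac{n}{2}-1\;=\;\frac{n-2}{2}.
\]

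\textbf{Step 2 (attainment).} Setting $q=n-1\ge2$ in Theorem~\ref{thm:star} gives $\gamma^{+}(K_{1,n-1})=(q-1)/2=(n-2)/2$. So the star meets the bound, and it is therefore a maximizer among connected graphs on $n$ vertices. The hypothesis $n\ge3$ is exactly what makes $q\ge2$, which Theorem~\ref{thm:star} requires.

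\textbf{Step 3 (remarks on the argument).} No step is a genuine obstacle. The one point to state explicitly is that the bound of Lemma~\ref{lem:trivial} is monotone in $m$, so the worst case over connected graphs is the minimum edge count, attained by trees.

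Equality in Step 1 forces $m=n-1$, so any maximizer must be a tree. Uniqueness of the star is not claimed in the statement, and the argument above does not give it. The reason is that every tree $T$ has the same upper bound $(n-2)/2$ from this argument, so its true value could in principle also equal $(n-2)/2$. To settle this one would compare with Theorem~\ref{thm:degree}. Alternatively, one would show that a tree with $\Delta<n-1$ embeds in a sparser abelian Cayley host, for example a circulant of degree $\Delta^{*}$ or $\Delta^{*}+2$. I would not attempt this here, since the theorem asserts only that the star attains the maximum.
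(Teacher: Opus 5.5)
Your proposal is correct and is the paper's proof. Both argue the same way: the complete-graph bound of Lemma~\ref{lem:trivial}, then monotonicity in $m$ together with $m\ge n-1$, then Theorem~\ref{thm:star} at $q=n-1$ for attainment. Your observation that equality forces $G$ to be a tree, and that this argument does not give uniqueness, also matches the paper's discussion following the theorem.
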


\begin{proof}
By Lemma~\ref{lem:trivial}, $\gamma^{+}(G)\le n(n-1)/(2m)-1$. This expression
is decreasing in $m$, and connectivity gives $m\ge n-1$, so
\[
  \gamma^{+}(G)\;\le\;\frac{n(n-1)}{2(n-1)}-1\;=\;\frac{n}{2}-1
  \;=\;\frac{n-2}{2}.
\]
Theorem~\ref{thm:star} with $q=n-1$ gives
$\gamma^{+}(K_{1,n-1})=(n-2)/2$, so the value is attained.
\end{proof}

Uniqueness is a separate matter, and we prove only a reduction. Equality in
the proof above forces both $m=n-1$, so that $G$ is a tree, and equality in
Lemma~\ref{lem:trivial}, so that no abelian Cayley graph of degree less than
$n-1$ contains $G$ as a spanning subgraph. The star satisfies both: its
maximum degree is $n-1$, so $K_n$ is its only possible host. For a tree $T$
that is not a star we have $\Delta(T)\le n-2$, and it suffices to exhibit one
proper abelian Cayley host. Two are available. For $n$ even,
$\Cay(\Z_n,\Z_n\setminus\{0,n/2\})$ is $K_n$ minus a perfect matching, and
$T$ is a spanning subgraph of it under some labelling precisely when the
complement of $T$ has a perfect matching. For $n$ odd,
$\Cay(\Z_n,\Z_n\setminus\{0,\pm1\})$ is $K_n$ minus a Hamiltonian cycle, and
$T$ embeds precisely when the complement of $T$ is Hamiltonian. Uniqueness
therefore reduces to a classical question about complements of trees, which
we record as Problem~\ref{prob:unique}.

\begin{conjecture}\label{conj:starunique}
$K_{1,n-1}$ is the unique maximizer of $\gamma^{+}$ among connected graphs on
$n\ge 4$ vertices, and also the unique maximizer of $\gamma_{\triangle}$.
\end{conjecture}

The two halves are not equally supported. The first is
Theorem~\ref{thm:starmax} together with the reduction above. The second is
much stronger than anything proved here: since
$\gamma_{\triangle}(K_{1,q})=(2q-3)/q\to 2$ while
Corollary~\ref{cor:bounded} only gives $\gamma_{\triangle}\le 2$ for
non-trees, a proof would have to separate two quantities that agree in the
limit. The census of Section~\ref{sec:census} confirms both halves for
$n\le 7$ and is our only evidence for the second.

\subsection{Random graphs}

Before reading these figures, note their composition: $853$ of the $995$
graphs have exactly seven vertices, so the median, the two rates and the
correlation are, in effect, statistics of $n=7$ rather than properties of the
invariants across orders. They should be stratified by $n$ before any of them
is quoted as a general tendency.
The degree bound is attained on most of the census
(Table~\ref{tab:census-strat}), but small graphs are atypical. On random graphs the bound is worthless and the trivial upper bound
of Lemma~\ref{lem:trivial} is essentially the truth.

\begin{theorem}\label{thm:random}
Let $G\sim G(n,1/2)$. Then with high probability
$\gamma^{+}(G)=1-o(1)$; that is, completing $G$ to $K_n$ is asymptotically
optimal.
\end{theorem}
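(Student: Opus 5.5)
The plan is a two-sided estimate. Both sides rest on the fact that the edge count of $G\sim G(n,1/2)$ concentrates: by Chernoff, with high probability $m=\tfrac12\binom n2\bigl(1+O(n^{-1/2}\log n)\bigr)$.

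\emph{Upper bound.} Lemma~\ref{lem:trivial} gives
\[
  \gamma^{+}(G)\le \frac{n(n-1)}{2m}-1 .
\]
Substituting the concentrated value of $m$ gives $\gamma^{+}(G)\le 2\bigl(1+o(1)\bigr)-1=1+o(1)$ with high probability. This is the ``completing to $K_n$'' half of the statement.

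\emph{Lower bound.} Theorem~\ref{thm:degree} is useless here, because $\Delta\approx n/2$ and the bound it gives is only $o(1)$. Instead I will show that, with high probability, every abelian Cayley host containing $G$ has at least $(1-\varepsilon)\binom n2$ edges, with $\varepsilon=C\log n/n$. I will use a union bound over \emph{labelled} hosts, meaning a triple $(\Gamma,S,\pi)$. For a fixed triple, the image $\pi^{-1}(\Cay(\Gamma,S))$ is a fixed graph $H$ on $V(G)$. If $|E(H)|\le(1-\varepsilon)\binom n2$, then $G\subseteq H$ requires that $G$ contain none of the at least $\varepsilon\binom n2$ non-edges of $H$. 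That event has probability at most $2^{-\varepsilon\binom n2}$.

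Next I count the triples. There are at most $n^{O(\log n)}$ abelian groups of order $n$, at most $2^{n}$ symmetric sets $S$, and at most $n!$ bijections $\pi$. The total is $e^{O(n\log n)}$. Against this, $2^{-\varepsilon\binom n2}=e^{-\Omega(C n\log n)}$, so for a large enough constant $C$ the union bound tends to $0$. It follows that, with high probability, any valid completion has at least $(1-\varepsilon)\binom n2$ edges. Therefore
\[
  \gamma^{+}(G)\ge\frac{(1-\varepsilon)\binom n2-m}{m}=1-o(1),
\]
again using the concentration of $m$.

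Combining the two sides gives $\gamma^{+}(G)=1+o(1)$ from above and $1-o(1)$ from below. This is the claim: the value tends to $1$, and completing to $K_n$ is asymptotically optimal.

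The main obstacle is the counting in the union bound. One must check that the number of labelled hosts is only $e^{O(n\log n)}$, so that it is beaten by the $e^{-\Theta(\varepsilon n^2)}$ probability. The only delicate point is that $\varepsilon$ may tend to $0$, but no faster than about $\log n/n$. Everything else is routine Chernoff concentration of $m$.
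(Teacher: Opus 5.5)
Your proof is correct and is essentially the paper's argument: a first-moment union bound over the $e^{O(n\log n)}$ labelled hosts $(\Gamma,S,\pi)$, each containing $G$ with probability $2^{-(\text{number of non-edges})}$, combined with concentration of $m$ and the trivial completion to $K_n$ from Lemma~\ref{lem:trivial}. The only difference is the missing-edge threshold: you take $\varepsilon\binom n2$ with $\varepsilon=C\log n/n$, that is $\Theta(n\log n)$ missing edges, where the paper uses $n^{3/2}$; both beat the host count, and yours gives a slightly sharper error term from the counting step.
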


\begin{proof}
Write $N=\binom{n}{2}$. An abelian Cayley graph on the vertex set $V$ is
determined by an abelian group $\Gamma$ of order $n$, a bijection
$V\to\Gamma$, and a symmetric set $S\subseteq\Gamma\setminus\{0\}$. The
number of abelian groups of order $n$ is $n^{o(1)}$, so the number of
candidate hosts is at most $n^{o(1)}\cdot n!\cdot 2^{n}=2^{(1+o(1))n\log n}$.

Fix a host $H$ with $M$ edges. Since the $N$ potential edges of $G$ are
independent, $\Pr[G\subseteq H]=2^{-(N-M)}$. Hence the expected number of
hosts $H$ with $N-M\ge n^{3/2}$ that contain $G$ is at most
$2^{(1+o(1))n\log n-n^{3/2}}=o(1)$. By Markov's inequality, with high
probability every abelian Cayley graph containing $G$ has at least
$N-n^{3/2}$ edges.

Also with high probability $m=N/2+O(n)$. Therefore
\[
  \gamma^{+}(G)\;\ge\;\frac{N-n^{3/2}-m}{m}
  \;=\;\frac{N/2-n^{3/2}+O(n)}{N/2+O(n)}\;=\;1-O(n^{-1/2}),
\]
while Lemma~\ref{lem:trivial} gives $\gamma^{+}(G)\le N/m-1=1+O(n^{-1})$.
\end{proof}

\begin{remark}
Theorem~\ref{thm:random} should be contrasted with the degree bound, which on
$G(n,1/2)$ evaluates to
$n\Delta/(2m)-1=\Theta\bigl(\sqrt{\log n/n}\bigr)$, since
$\Delta=n/2+\Theta(\sqrt{n\log n})$ and $m=N/2+O(n)$ with high probability.
The bound therefore tends to $0$ while the truth tends to $1$: on almost all
graphs the degree bound captures none of the difficulty. Its usefulness on
the census is a feature of small, structured graphs, and the equality class
of Theorem~\ref{thm:equality} has density $o(1)$.
\end{remark}

\section{Hardness}
\label{sec:hard}

\begin{theorem}\label{thm:nphard}
Deciding, for a fixed cyclic host $C_n$ and a given budget $B$, whether
$G$ can be edited into $C_n$ with at most $B$ edits is NP-complete.
\end{theorem}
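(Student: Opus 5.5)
We first show membership in NP, then reduce from Hamiltonian Cycle. The problem is in NP because a certificate is a bijection $\pi:V(G)\to\Z_n$. Given $\pi$, the quantity $|E_\pi(G)\triangle E(C_n)|$ is computed in $O(n+m)$ time by checking, for each edge $uv$, whether $\pi(v)-\pi(u)\in\{\pm1\}$. This is Proposition~\ref{prop:cost} with the single class $[1]$ in $S$. Here $n\ge 3$, so $\mathrm{orb}(1)=n$ by Lemma~\ref{lem:orbit}.

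The core of the argument is an exact cost formula for a labeling. Fix $\pi$ and let $k$ be the number of edges $uv\in E(G)$ with $[\pi(v)-\pi(u)]=[1]$, that is, edges landing on consecutive positions of the cycle. By Proposition~\ref{prop:cost} with $S=\{[1]\}$, the edit cost is
\[
(n-k)+(m-k)=n+m-2k.
\]
Minimizing the cost is therefore the same as maximizing $k$, the number of edges of $G$ that $\pi$ maps onto edges of $C_n$. These $k$ edges form a subgraph of $G$ that is isomorphic to a subgraph of $C_n$. Such a subgraph is either all of $C_n$, which needs $k=n$, or a disjoint union of paths. In particular, $k=n$ holds if and only if $G$ contains a Hamiltonian cycle and $\pi$ lists its vertices in cyclic order.

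The reduction maps an instance $G$ of Hamiltonian Cycle, with $n\ge3$, to the instance $(G,C_n,B)$ with $B=m-n$. Here $m$ is polynomially bounded and the map is clearly polynomial-time. If $G$ is Hamiltonian, labeling along the cycle gives $k=n$ and cost $m-n=B$. Conversely, suppose some $\pi$ achieves cost at most $m-n$. Then $n+m-2k\le m-n$ gives $k\ge n$. Since the host has only $n$ edges, $k=n$, so every edge of $C_n$ is the image of an edge of $G$, and $\pi^{-1}$ of the host cycle is a Hamiltonian cycle in $G$. Hamiltonian Cycle is NP-complete (Karp), so the edit problem is NP-hard, and together with membership it is NP-complete.

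The main obstacle is only bookkeeping. One must check that $k$ counts each matched host edge exactly once, which holds because $G$ is simple and $\pi$ is a bijection. One must also handle small $n$: for $n=3$ the cycle $C_3=K_3$ is fine, and we exclude $n\le2$, where $[1]$ would be an involution class with $\mathrm{orb}=n/2$. For the refinement quoted in the abstract, that the optimum is $m-n+2\pp(G)$, one would add a second step. This step uses the fact that when $G$ is not Hamiltonian, the maximum of $k$ equals $n$ minus the minimum number of vertex-disjoint paths covering $V(G)$. Theorem~\ref{thm:nphard} itself needs only the Hamiltonian threshold $k=n$.
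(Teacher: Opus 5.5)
Your proof is correct and follows the paper's: the same reduction from Hamiltonian Cycle with host $\Cay(\Z_n,\{\pm1\})$, the same cost identity $n+m-2k$, and the same labeling certificate for membership in NP. The only difference is that you go straight to the threshold $B=m-n$, where $k\ge n$ forces a Hamiltonian cycle. The paper instead first computes the full optimum $m-n+2p_{\min}(G)$ via spanning linear forests and reads the threshold off from it, a detour it needs for Corollary~\ref{cor:cyclic-exact} but not for the theorem itself.
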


\begin{proof}
Reduce from Hamiltonian Cycle. Given $G$ on $n$ vertices with $m$ edges, a
labeling $\pi:V(G)\to\Z_n$ places the edges of $G$ into the classes of
$\Z_n$; restrict the host to $S=\{\pm1\}$, so the host is the cycle $C_n$
with $n$ edges. If the labeling maps a subgraph of $G$ with $k$ edges onto
consecutive pairs $\{i,i+1\}$, the edit cost is $(n-k)+(m-k)=n+m-2k$, since
$n-k$ host edges must be added and the other $m-k$ edges of $G$ deleted.
Minimizing the cost is therefore maximizing $k$. The edges counted by $k$
are those mapped onto consecutive pairs of the cyclic order, so they form a
subgraph of the cycle $C_n$ on $V(G)$, that is, a spanning \emph{linear
forest} of $G$: a disjoint union of paths covering every vertex, together
with the full cycle in the extremal case. A spanning linear forest with $p$
components has $n-p$ edges, so
\[
  \max_\pi k=n-p_{\min}(G),
\]
where $p_{\min}(G)$ is the least number of parts in a partition of $V(G)$
into vertex-disjoint paths --- the \emph{path partition number} $\pp(G)$ ---
except that $p_{\min}(G)=0$ when $G$ is Hamiltonian, the whole cycle then
being available in place of a spanning linear forest. The distinction matters
below: $\pp(G)\ge 1$ for every graph, so $p_{\min}$ and $\pp$ agree exactly on
the non-Hamiltonian graphs. Hence the minimum edit cost is
$m-n+2p_{\min}(G)$, and it equals $m-n$ precisely when $p_{\min}(G)=0$, that
is, exactly when $G$ has a Hamiltonian cycle. Hamiltonicity is NP-complete,
so the completion problem with fixed host is NP-hard. Membership in NP is
immediate: a labeling $\pi$ is a certificate, and its edit cost against the
fixed host is computed in time $O(n^2)$. The problem is therefore
NP-complete.
\end{proof}

The reduction uses only the cyclic host, so hardness is not an artefact of
searching over groups: it is present even when the target is completely
specified. This is what separates the completion problem from the lower bound
of Section~\ref{sec:lower}, which is linear-time.

Identifying the objective with $p_{\min}$ also transfers hardness to every
class on which Hamiltonicity is hard, so Theorem~\ref{thm:nphard} holds for
planar cubic graphs and for bipartite graphs.

\subsection{An exact cyclic value and a computable relaxation}

The identification of the objective with $p_{\min}$ cuts the other way as
well. It places the fixed-host problem inside a well-studied family of
partition parameters, and every upper bound known for that family becomes an
upper bound for $\gamma_{\triangle}$. Write
$\gamma_{\triangle}^{\mathrm{cyc}}(G)$ for the variant of
$\gamma_{\triangle}(G)$ in which the host is required to be the cycle, that
is, $\Gamma=\Z_n$ and $S=\{\pm1\}$.

\begin{corollary}[Exact value on a cyclic host]\label{cor:cyclic-exact}
Let $G$ be connected with $n\ge3$. Then
\[
  \gamma_{\triangle}^{\mathrm{cyc}}(G)=
  \begin{cases}
    \dfrac{m-n}{m} & \text{if $G$ is Hamiltonian,}\\[2.2ex]
    \dfrac{m-n+2\pp(G)}{m} & \text{otherwise,}
  \end{cases}
\]
and in particular
$\gamma_{\triangle}(G)\le\gamma_{\triangle}^{\mathrm{cyc}}(G)$.
\end{corollary}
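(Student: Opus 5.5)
The corollary follows by specialising the computation in the proof of Theorem~\ref{thm:nphard} and checking its two combinatorial claims directly. Fix the host $\Cay(\Z_n,\{\pm1\})=C_n$; the hypothesis $n\ge3$ makes $\{1,-1\}$ a single class with $\mathrm{orb}(1)=n$ by Lemma~\ref{lem:orbit}, so the host has exactly $n$ edges. For a bijection $\pi$, let $k(\pi)$ be the number of edges of $G$ sent to consecutive pairs $\{i,i+1\}$. Proposition~\ref{prop:cost}, with the single class in $S$ and every other class outside, gives edit cost $(n-k(\pi))+(m-k(\pi))=n+m-2k(\pi)$. Hence $m\,\gamma_{\triangle}^{\mathrm{cyc}}(G)=n+m-2\max_\pi k(\pi)$, and it remains to compute $K:=\max_\pi k(\pi)$.

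\textbf{Upper bound on $K$.} The edges counted by $k(\pi)$ form a subgraph $F$ of $G$ whose image lies in $C_n$. If $F$ is the whole cycle, then $G$ is Hamiltonian and $k=n$. Otherwise $F$ is a proper spanning subgraph of $C_n$, so it is a spanning linear forest with $p\ge1$ components (isolated vertices counted as trivial paths) and $n-p$ edges. Its components partition $V(G)$ into paths of $G$, so $p\ge\pp(G)$. Therefore $K\le n$ if $G$ is Hamiltonian, and $K\le n-\pp(G)$ otherwise.

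\textbf{Lower bound on $K$.} For the matching constructions, if $G$ has a Hamiltonian cycle $v_0v_1\cdots v_{n-1}v_0$, set $\pi(v_i)=i$; all $n$ cycle edges then land on consecutive pairs. Otherwise, take a partition of $V(G)$ into $\pp(G)$ paths $Q_1,\dots,Q_p$. Concatenate them in order and label the vertices $0,1,\dots,n-1$ along this concatenation. Each path contributes $|Q_j|-1$ edges on consecutive pairs, giving $k\ge n-\pp(G)$. Substituting $K$ gives $m-n$ and $m-n+2\pp(G)$ respectively.

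\textbf{The comparison and the main obstacle.} The inequality $\gamma_{\triangle}(G)\le\gamma_{\triangle}^{\mathrm{cyc}}(G)$ holds because $\Z_n$ with $S=\{\pm1\}$ is one admissible choice in the minimum of Definition~\ref{def:gamma}: $S$ generates $\Z_n$ and is symmetric. The only delicate step is the upper bound on $K$, specifically the claim that a proper subgraph of $C_n$ is a linear forest whose component count is at least $\pp(G)$. I would handle it by observing that removing at least one edge from $C_n$ leaves a disjoint union of paths, and that $\pi^{-1}$ of each such path is a path in $G$ because every edge of $F$ is an edge of $G$. A minor point worth stating explicitly is that when $G$ is Hamiltonian, using a linear forest with $p\ge1$ is never better than the full cycle, so the case split is exactly as stated.
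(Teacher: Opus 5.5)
Your proposal is correct and follows the paper's route. The paper's proof reads off the minimum cost $n+m-2\max_\pi k$ from the proof of Theorem~\ref{thm:nphard}, with $\max_\pi k$ equal to $n$ in the Hamiltonian case and $n-\pp(G)$ otherwise, and gets the inequality from the admissibility of $\Cay(\Z_n,\{\pm1\})$. Your version is more complete on one point: you explicitly prove that $\max_\pi k$ is actually achieved, by labeling along a Hamiltonian cycle or along a concatenated minimum path partition, which the paper leaves implicit.
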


\begin{proof}
The displayed value is the minimum edit cost computed in the proof of
Theorem~\ref{thm:nphard}, divided by $m$, with $p_{\min}$ resolved into its
two cases. The inequality holds because $\Cay(\Z_n,\{\pm1\})$ is one of the
hosts admissible in Definition~\ref{def:gamma}.
\end{proof}

Computing $\pp$ is itself NP-hard, so Corollary~\ref{cor:cyclic-exact} trades
one hard problem for another. It becomes useful when combined with an upper
bound on $\pp$ that is cheap to evaluate. Penev, Sandeep, Supraja and
Taruni~\cite{penev-sandeep-supraja-taruni2026} supply one, in terms of the
matching number $\nu(G)$: they prove
$\indpp(G)\le\ipp(G)\le n-\nu(G)$, where $\indpp$ and $\ipp$ are the induced
and isometric path partition numbers. Since every induced path is a path,
$\pp\le\indpp$, and the chain extends to $\pp$ on the left.

\begin{proposition}[Matching relaxation]\label{prop:matching-bound}
Let $G$ be connected with $n\ge3$. Then
\[
  \gamma_{\triangle}(G)\;\le\;1+\frac{n-2\nu(G)-2}{m},
\]
and the right-hand side is computable in polynomial time.
\end{proposition}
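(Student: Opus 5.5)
The plan is to chain three comparisons: the admissible-host inequality of Corollary~\ref{cor:cyclic-exact}, the exact cyclic value in that corollary, and a matching-based upper bound on $\pp(G)$. By Corollary~\ref{cor:cyclic-exact}, $\gamma_{\triangle}(G)\le\gamma_{\triangle}^{\mathrm{cyc}}(G)$, so it suffices to show $m\,\gamma_{\triangle}^{\mathrm{cyc}}(G)\le m+n-2\nu(G)-2$. I split into the two cases of that corollary.

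\emph{Hamiltonian case.} Here the cyclic cost is $m-n$, and the required inequality $m-n\le m+n-2\nu-2$ reduces to $\nu\le n-1$. This holds because $\nu\le n/2\le n-1$ for $n\ge2$.

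\emph{Non-Hamiltonian case.} Here the cost is $m-n+2\pp(G)$, so I need $\pp(G)\le n-\nu(G)-1$. The cited chain $\pp\le\indpp\le\ipp\le n-\nu$ of \cite{penev-sandeep-supraja-taruni2026} falls short by exactly one, so I would sharpen it directly. A spanning linear forest with $k$ edges has $n-k$ components, so $\pp(G)\le n-k$ for any such forest, and it is enough to exhibit one with $\nu+1$ edges.

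\begin{itemize}
\item Take a maximum matching $M$ and regard $M$, together with the unmatched vertices, as a spanning linear forest. It has $n-\nu$ components, each an edge or a single vertex.
\item Since $n\ge3$ we have $\nu\le n/2<n-1$, so there are at least two components.
\item Connectivity of $G$ then supplies an edge $e\notin M$ joining two different components.
\item Adding $e$ merges two paths of length at most $1$ at their endpoints, because every vertex of these components is an endpoint. The result is a path of length at most $3$, so $M\cup\{e\}$ is a spanning linear forest with $\nu+1$ edges.
\end{itemize}

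Hence $\pp(G)\le n-\nu-1$, and dividing $m-n+2(n-\nu-1)$ by $m$ gives the bound.

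For computability, $\nu(G)$ is found in polynomial time by Edmonds' blossom algorithm, and the rest of the right-hand side is arithmetic in $n$ and $m$.

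The main obstacle is the off-by-one improvement over the cited matching bound, i.e.\ the non-Hamiltonian case. The key observation is that the matching forest can always be extended by one edge while remaining a linear forest. This needs both connectivity and $n\ge3$: for $n=2$ the matching is already a single component, and the bound would fail.
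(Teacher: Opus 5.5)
Your proof is correct and follows the paper's argument. It reduces via Corollary~\ref{cor:cyclic-exact}, disposes of the Hamiltonian case through $\nu(G)\le n-1$, and sharpens the maximum-matching path partition by one merge to obtain $\pp(G)\le n-\nu(G)-1$. Your merge step takes any edge of $G$ between two parts and uses the fact that every vertex of a part is a path endpoint; this is a slightly more uniform version of the paper's case split on whether the neighbour $w$ of a matched vertex is saturated.
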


\begin{proof}
We first show $\pp(G)\le n-\nu(G)-1$. Let $M$ be a maximum matching of $G$
and $U$ the set of $M$-unsaturated vertices, so that $M\cup U$ is a path
partition of $G$ with $|M|+|U|=n-\nu(G)$ parts. Since $G$ is connected with
$n\ge3$, the matching $M$ is nonempty; fix $uv\in M$. Again by connectivity
and $n\ge3$, one of $u,v$ has a neighbour outside $\{u,v\}$, say
$vw\in E(G)$ with $w\notin\{u,v\}$. If $w\in U$, replace the parts $uv$ and
$w$ by the single path $uvw$. Otherwise $ww'\in M$ for some
$w'\notin\{u,v,w\}$, and we replace the parts $uv$ and $ww'$ by the single
path $uvww'$. Either way we obtain a path partition with $n-\nu(G)-1$ parts.

Now let $G$ be non-Hamiltonian. By Corollary~\ref{cor:cyclic-exact},
\[
  \gamma_{\triangle}(G)\le\frac{m-n+2\pp(G)}{m}
  \le\frac{m-n+2\bigl(n-\nu(G)-1\bigr)}{m}
  =1+\frac{n-2\nu(G)-2}{m}.
\]
If instead $G$ is Hamiltonian then $\nu(G)=\lfloor n/2\rfloor$, so
$n-2\nu(G)-2\ge-2$, while $\gamma_{\triangle}(G)\le(m-n)/m=1-n/m\le1-2/m$;
the bound holds in this case too. Finally, $\nu(G)$ is computable in
polynomial time by Edmonds' algorithm, and $n$ and $m$ are read off $G$.
\end{proof}

The step $\pp(G)\le n-\nu(G)-1$ is the connected case of an observation
recorded as Proposition~1.6 of the arXiv version
of~\cite{penev-sandeep-supraja-taruni2026}, where it is credited to an
anonymous referee and stated in the sharper form that the graphs attaining
$\pp(G)=n-\nu(G)$ are exactly the disjoint unions of $K_1$'s and $K_2$'s. It
does not appear in the published version, so we have given the two-line
argument above rather than cite it.

\begin{remark}[Where the relaxation is tight]\label{rem:relaxation-tight}
Proposition~\ref{prop:matching-bound} is attained on the star. For
$G=K_{1,q}$ we have $n=q+1$, $m=q$ and $\nu=1$, so the bound reads
$(2q-3)/q$; and $\pp(K_{1,q})=q-1$, since a path in a star has at most three
vertices, so Corollary~\ref{cor:cyclic-exact} gives the same value. At $q=6$
both equal $3/2$, which Table~\ref{tab:census} records as the maximum of
$\gamma_{\triangle}$ over the census, attained only by $K_{1,6}$.

The relaxation is loose in the opposite regime. On a path $P_n$ with $n$
even the bound gives $1-2/(n-1)$, whereas $\pp(P_n)=1$ and
Corollary~\ref{cor:cyclic-exact} gives $1/(n-1)$. The reason is visible in
the proof: a perfect matching wastes the whole of $\nu$ on a graph that a
single path already covers.

Note also that the unsharpened chain, which would give
$\gamma_{\triangle}\le1+(n-2\nu)/m$, is never attained by a connected graph
on at least three vertices, since by the result quoted above the only
$\pp$-extremal graphs are disjoint unions of $K_1$'s and $K_2$'s. The
subtraction of $2/m$ is therefore not a refinement of an otherwise sharp
bound but a correction of a systematic one.
\end{remark}

\begin{remark}[Why the induced and isometric variants are the wrong
invariants here]\label{rem:wrong-invariant}
Theorems~1.6 and 1.7 of~\cite{penev-sandeep-supraja-taruni2026} characterize
the graphs attaining $\ipp(G)=n-\nu(G)$, respectively
$\indpp(G)=n-\nu(G)$, as those all of whose blocks are odd complete graphs,
with one even exceptional block permitted. On such graphs the upper end of
the chain is tight while $\pp$ drops strictly below it: $K_3$ has
$\ipp=\indpp=2=n-\nu$ but $\pp=1$. Since it is $\pp$, not $\ipp$ or
$\indpp$, that measures the cost in Theorem~\ref{thm:nphard} --- the edges
counted there form a spanning linear forest, with no induced or isometric
restriction --- their extremal characterization bounds the completion cost
without ever computing it. This is what forces the separate treatment of
$\pp$ above.
\end{remark}

\subsection{Bounded edit budget}

A natural parameter is the budget itself. Let $B$ be the number of edges that
may be added, and write $N=\binom{n}{2}$.

\begin{proposition}\label{prop:xp}
Suppose that deciding whether a given graph on $n$ vertices is an abelian
Cayley graph can be done in time $T(n)$. Then deciding whether
$\gamma^{+}(G)\le B/m$ can be done in time $O(n^{2B})\cdot T(n)$.
\end{proposition}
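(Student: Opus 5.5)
The plan is brute force over the possible sets of added edges, using the oracle for recognizing abelian Cayley graphs as a black box. By Definition~\ref{def:gamma}, $\gamma^{+}(G)\le B/m$ holds exactly when there is a set $F$ of non-edges of $G$ with $|F|\le B$ such that $G+F$, on the vertex set $V(G)$, is isomorphic to some $\Cay(\Gamma,S)$ with $\Gamma$ abelian of order $n$. The point to record is that the minimum in the definition, taken over $\Gamma$, $\pi$ and $S$ with $E_\pi(G)\subseteq E(\Cay(\Gamma,S))$, equals the least $|F|$ for which $G+F$ is an abelian Cayley graph. In one direction, the pullback of $E(\Cay(\Gamma,S))\setminus E_\pi(G)$ under $\pi$ is such an $F$. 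In the other, an isomorphism from $G+F$ to a Cayley graph is precisely a bijection $\pi$ whose image of $E(G)$ lies in the host edge set.

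The algorithm enumerates every subset $F$ of the $N-m$ non-edges with $|F|\le B$ and runs the recognition procedure on $G+F$, accepting if any call succeeds. The number of candidates is
\[
  \sum_{j=0}^{B}\binom{N-m}{j}\;\le\;\sum_{j=0}^{B}N^{j}\;\le\;(B+1)\,N^{B}\;\le\;(B+1)\,n^{2B}.
\]
Forming $G+F$ costs $O(n^2)$, which is absorbed into $T(n)$ since recognition must at least read its input, so $T(n)=\Omega(n^2)$. This gives total time $O\bigl((B+1)n^{2B}\bigr)\cdot T(n)$.

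I expect the only real obstacle to be the factor $B+1$ against the claimed $O(n^{2B})$. There are two ways to remove it. The first is to use the sharper estimate $\sum_{j\le B}\binom{N}{j}\le 2N^{B}/B!$ when $N\ge2B$, which bounds the count by $O(n^{2B})$ uniformly in $B$. The second is parity: every abelian Cayley host has $n|S|/2$ edges, so only those $j$ with $m+j\equiv n|S|/2$ need testing, and more decisively the degree constraint from Theorem~\ref{thm:degree} fixes the candidate sizes $j=nd/2-m$ to at most a few values $d\ge\Delta^{*}$ with $nd/2-m\le B$. When $N<2B$, every completion is within budget, so the bound is trivial anyway. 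Correctness is then immediate from the equivalence recorded in the first step.
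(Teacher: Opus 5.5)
Your proof is correct and follows the paper's own argument: enumerate the at most $\sum_{i\le B}\binom{N-m}{i}$ sets of non-edges and run the recognizer on each. You also make explicit the equivalence with Definition~\ref{def:gamma}, which the paper leaves implicit. Neither of your two remedies for the factor $B+1$ is needed: $N\le n^2/2$ and $B+1\le 2^B$ give $(B+1)N^B\le n^{2B}$ directly. That also lets you drop the $N<2B$ case, whose stated reason is wrong in general, since the completion to $K_n$ uses $N-m$ edges and this can exceed $B$ even when $N<2B$.
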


\begin{proof}
Enumerate the at most $\sum_{i\le B}\binom{N-m}{i}=O(n^{2B})$ sets of at most
$B$ non-edges, add each in turn, and test the resulting graph.
\end{proof}

The recognition step is unconditional when the host is required to be cyclic.
Evdokimov and Ponomarenko \cite{evdokimov-ponomarenko2004} give a
polynomial-time algorithm deciding whether a given graph is a circulant, by way
of the theory of Schur rings over a cyclic group and coherent configurations.
Combining this with Proposition~\ref{prop:xp} gives the following.

\begin{corollary}\label{cor:xp-cyclic}
For a fixed budget $B$, deciding whether $G$ can be completed to a circulant by
adding at most $B$ edges is solvable in polynomial time.
\end{corollary}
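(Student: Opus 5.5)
The plan is to combine Proposition~\ref{prop:xp} with the Evdokimov--Ponomarenko circulant recognition algorithm \cite{evdokimov-ponomarenko2004}. First I would fix the meaning of ``completed to a circulant'': a graph obtained from $G$ by adding at most $B$ non-edges that is isomorphic to some $\Cay(\Z_n,S)$. By Definition~\ref{def:gamma}, only additions are allowed and the completed graph must be the Cayley graph itself on $V(G)$, so this is exactly the additive completion problem with $\Gamma$ restricted to $\Z_n$.

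The algorithm enumerates every set $F$ of at most $B$ non-edges of $G$. There are $\sum_{i\le B}\binom{N-m}{i}\le (B+1)N^{B}=O(n^{2B})$ such sets. For each $F$ it runs the circulant recognition algorithm on $G+F$ and accepts if any test succeeds. This is the procedure in the proof of Proposition~\ref{prop:xp}, with the abelian Cayley recognition oracle replaced by a circulant oracle. That substitution only changes which hosts count, not the enumeration argument.

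Correctness in both directions is routine. If $G+F$ is a circulant, then some bijection $\pi:V(G)\to\Z_n$ and some symmetric $S$ realize it as $\Cay(\Z_n,S)$, and $E_\pi(G)\subseteq E(\Cay(\Z_n,S))$. Conversely, any admissible cyclic completion with at most $B$ added edges is $G+F$ for the set $F$ of added edges, and that $F$ occurs in the enumeration.

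One point needs care. Definition~\ref{def:gamma} requires $S$ to generate $\Gamma$, so the host is connected. Since $G$ is connected and $G+F$ is a spanning supergraph of it, $G+F$ is connected too. Any circulant representation $\Cay(\Z_n,S)$ of a connected graph then has $\langle S\rangle=\Z_n$, because otherwise the cosets of $\langle S\rangle$ would be separate components. So the recognition algorithm's answer matches the definition exactly.

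For the running time, the recognition algorithm takes time polynomial in $n$, say $T(n)=n^{O(1)}$. The total is then $O(n^{2B})\cdot n^{O(1)}$, which is polynomial for each fixed $B$.

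The only real obstacle is the recognition step. I would invoke \cite{evdokimov-ponomarenko2004} as a black box rather than reprove it. I would also check that its output criterion is isomorphism to a circulant, which is exactly what we need, since the labeling $\pi$ is free.
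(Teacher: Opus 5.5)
Your proposal is correct and is essentially the paper's argument: the paper obtains the corollary by using the Evdokimov--Ponomarenko polynomial-time circulant recognition algorithm as the oracle $T(n)$ in the enumeration of Proposition~\ref{prop:xp}, which is exactly what you do. Your remarks on the connectivity of $G+F$ and on the count $\sum_{i\le B}\binom{N-m}{i}=O(n^{2B})$ only spell out details the paper leaves implicit.
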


This places the cyclic-host problem in $\mathrm{XP}$ with respect to $B$
unconditionally, and the general problem $\mathrm{XP}$ conditionally
on the recognition step, but neither in $\mathrm{FPT}$: the exponent grows with
$B$. Whether it is fixed-parameter tractable is open, and the obstruction is
visible at the bottom of the parameter range. Any $\mathrm{FPT}$ algorithm
would in particular decide the case $B=0$ in polynomial time, and that is the
question of recognising abelian Cayley graphs, equivalently of deciding
whether $\operatorname{Aut}(G)$ contains a regular abelian subgroup. We are
not aware of a polynomial algorithm for that problem in general, so
Problem~\ref{prob:fpt} should be read as asking about the recognition problem
first.

\section{The census}
\label{sec:census}

We computed $\gamma^{+}$ and $\gamma_{\triangle}$ exactly for all $995$
connected graphs on $2\le n\le 7$ vertices, one representative per
isomorphism class, enumerated in the ordering of the Atlas of Graphs
\cite{read-wilson1998}. All statistics below are over these $995$
isomorphism classes, each counted once. Exactness rests on the three
reductions of Section~\ref{sec:struct}: translation normalization leaves
$(n-1)!$ labelings, Proposition~\ref{prop:cost} reduces the cost of a
labeling to its class-count vector, and the admissible generating sets are
enumerated exhaustively. No heuristic enters, so every value below is a
certificate.

\begin{convention}\label{conv:connected}
Throughout, a completion target $\Cay(\Gamma,S)$ is required to be connected,
that is, $S$ generates $\Gamma$. For $\gamma^{+}$ this is automatic, since the
host contains the connected graph $G$ as a spanning subgraph. For
$\gamma_{\triangle}$ it is a genuine restriction, because deletions can
disconnect, and it is the convention under which every figure in this section
was computed. Dropping it changes the census: the number of graphs with
$\gamma_{\triangle}<\gamma^{+}$ rises from $843$ to $845$, the two additional
graphs being ones whose cheapest edit target is disconnected. No other
statistic reported here is affected.
\end{convention}

Table~\ref{tab:space} records the resulting search space: for
every order $n\le 7$ except $n=4$ there is a unique abelian group, and the
number of inverse-pair classes never exceeds three, so the enumeration over
generating sets is negligible and the cost is dominated by the $(n-1)!$
labelings.

\begin{table}[t]
\centering
\caption{The exhaustive search space per order. $|\mathcal{C}(\Gamma)|$ is
listed for each abelian group of that order.}
\label{tab:space}
\begin{tabular}{rrcr}
\toprule
$n$ & abelian groups & $|\mathcal{C}(\Gamma)|$ & labelings $(n-1)!$\\
\midrule
$2$ & $1$ & $1$ & $1$\\
$3$ & $1$ & $1$ & $2$\\
$4$ & $2$ & $2,\,3$ & $6$\\
$5$ & $1$ & $2$ & $24$\\
$6$ & $1$ & $3$ & $120$\\
$7$ & $1$ & $3$ & $720$\\
\bottomrule
\end{tabular}
\end{table}

\begin{table}[t]
\centering
\caption{The census of all $995$ connected graphs on at most seven vertices.
All values exact.}
\label{tab:census}
\begin{tabular}{lr}
\toprule
quantity & value\\
\midrule
graphs in census & $995$\\
median $\gamma_{\triangle}$ & $0.3846$\\
mean $\gamma_{\triangle}$ & $0.4006$\\
graphs with $\gamma_{\triangle}<\gamma^{+}$ & $843$ \ ($84.7\%$)\\
graphs with $\gamma_{\triangle}=0$ & $14$\\
maximum $\gamma_{\triangle}$ & $1.5000$, attained only by $K_{1,6}$\\
degree bound \eqref{eq:degree} attained & $890$ \ ($89.4\%$)\\
\quad under the plain form $n\Delta/2m-1$ & $476$ \ ($47.8\%$)\\
degree bound violated & $0$\\
Pearson $r$(bound, $\gamma^{+}$) & $0.8506$\\
Pearson $r$(bound, $\gamma_{\triangle}$) & $0.7411$\\
\bottomrule
\end{tabular}
\end{table}

\begin{table}[t]
\centering
\caption{The same census stratified by order. The pooled figures of
Table~\ref{tab:census} are dominated by the $853$ graphs on seven vertices,
and every rate varies sharply with $n$. Attainment is of \eqref{eq:degree};
the plain form $n\Delta/2m-1$ is given in brackets where it differs, and
differs only at odd $n$.}
\label{tab:census-strat}
\begin{tabular}{rrrrrrr}
\toprule
$n$ & graphs & med.\ $\gamma^{+}$ & med.\ $\gamma_{\triangle}$ &
$\gamma_{\triangle}<\gamma^{+}$ & bound attained & $r$(bound, $\gamma^{+}$)\\
\midrule
$4$ & $6$   & $0.267$ & $0.267$ & $0.0\%$  & $100.0\%$            & $1.000$\\
$5$ & $21$  & $0.667$ & $0.400$ & $61.9\%$ & $100.0\%$ \ ($61.9$) & $1.000$\\
$6$ & $112$ & $0.500$ & $0.333$ & $73.2\%$ & $84.8\%$             & $0.925$\\
$7$ & $853$ & $0.750$ & $0.385$ & $87.7\%$ & $89.7\%$ \ ($42.1$)  & $0.827$\\
\midrule
all & $995$ & $0.615$ & $0.385$ & $84.7\%$ & $89.4\%$ \ ($47.8$)  & $0.851$\\
\bottomrule
\end{tabular}
\end{table}

Three features of Table~\ref{tab:census} deserve comment.

\emph{Deletions usually help.} For $84.7\%$ of the census the edit distance
is strictly smaller than the completion number. Restricting to additions is
thus not a harmless simplification, and the two invariants should be reported
separately. Figure~\ref{fig:census}B shows the gap widening with $n$.

\emph{The star is extremal.} $K_{1,6}$ is the unique maximizer of
$\gamma_{\triangle}$, at exactly $3/2$, matching Theorem~\ref{thm:star} at
$q=6$. The star is simultaneously the family on which the isometric theory of
\cite{fokam-p2} is hardest and the family on which completion is most
expensive, which is not a priori obvious: the two frameworks fail on the same
object for the same underlying reason, its irregularity.

\emph{The linear-time bound is informative but not sufficient.} Bound
\eqref{eq:degree} is attained on $89.4\%$ of the census and correlates with
$\gamma^{+}$ at $r=0.851$, so it is a genuinely useful triage filter. It is
nonetheless not a substitute for the search: it is missed on $105$ graphs,
and Theorem~\ref{thm:random} shows that on almost all graphs it captures none
of the difficulty at all, so its performance here is a fact about small
structured graphs and not a general one.

\emph{The pooled rates are dominated by one order.} Since $853$ of the $995$
graphs have $n=7$, every pooled figure in Table~\ref{tab:census} is close to
its $n=7$ value. Table~\ref{tab:census-strat} stratifies, and the variation is
large: separation runs from $0\%$ at $n=4$ to $87.7\%$ at $n=7$, and
$r(\text{bound},\gamma^{+})$ falls monotonically from $1.000$ to $0.827$ as
$n$ grows. Both trends point the same way, towards the asymptotic regime of
Theorem~\ref{thm:random}, and neither should be read as a constant of the
problem.

\begin{figure}[t]
\centering
\includegraphics[width=\linewidth]{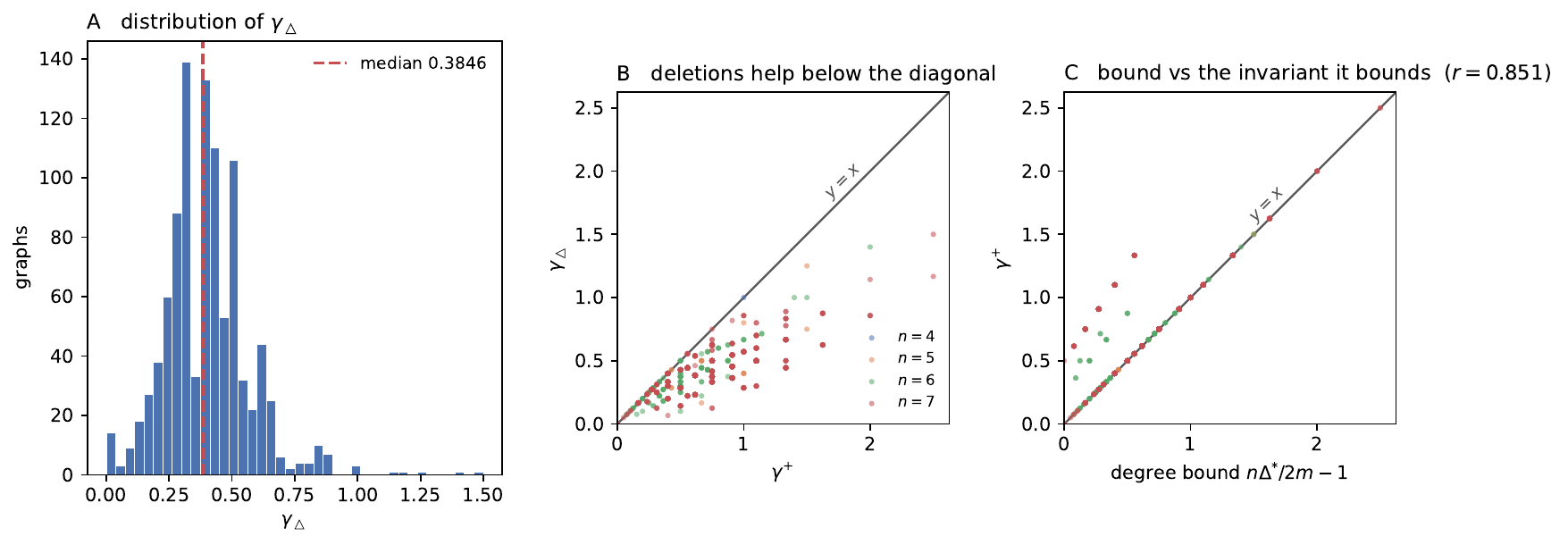}
\caption{The exact census of all $995$ connected graphs on at most seven
vertices. \textbf{A}~Distribution of $\gamma_{\triangle}$, with median
$0.3846$. \textbf{B}~$\gamma_{\triangle}$ against $\gamma^{+}$, coloured by
$n$; the $84.7\%$ of graphs for which deletions strictly help are those
strictly below the drawn diagonal $y=x$. \textbf{C}~The linear-time bound
\eqref{eq:degree} against $\gamma^{+}$, the invariant it actually bounds, with
Pearson $r=0.851$. Every point lies on or above $y=x$, as
Theorem~\ref{thm:degree} requires; the $890$ points on the diagonal are the
equality class of Theorem~\ref{thm:equality}. Panels B and C share equal axes
and aspect.}
\label{fig:census}
\end{figure}

\begin{remark}[The degree bound does not bound $\gamma_{\triangle}$]\label{rem:panelC}
Theorem~\ref{thm:degree} is a lower bound on $\gamma^{+}$, and since
$\gamma_{\triangle}\le\gamma^{+}$ it transports no lower bound to
$\gamma_{\triangle}$. The distinction is not academic, and the witness is
already in this paper. Let $G=\overline{C_3\cup C_4}$, the complement of the
disjoint union of a triangle and a quadrilateral. Then $G$ is $4$-regular on
$n=7$ vertices with $m=14$ edges, so $\Delta^{*}=\Delta=4$ and the bound of
Theorem~\ref{thm:degree} evaluates to $7\cdot4/(2\cdot14)-1=0$. But
$\gamma_{\triangle}(G)>0$: the only abelian group of order $7$ is $\Z_7$, and
the only $4$-regular circulant on $\Z_7$ up to isomorphism is
$\overline{C_7}$, which is connected, whereas $C_3\cup C_4$ is not; so $G$ is
not an abelian Cayley graph and some edit is required. Any plot of the degree
bound against $\gamma_{\triangle}$ therefore has points above the diagonal,
and a caption asserting otherwise would be false. This is why
Figure~\ref{fig:census}C plots the bound against $\gamma^{+}$ and not against
$\gamma_{\triangle}$; for the record the two correlations differ appreciably,
$r=0.851$ against $\gamma^{+}$ and $r=0.741$ against $\gamma_{\triangle}$.
\end{remark}

\section{Conclusion and open problems}

We have introduced two invariants measuring how far a graph is from being an
abelian Cayley graph on its own vertex set, shown the fixed-host problem
NP-hard by reduction from Hamiltonian Cycle, proved a linear-time degree lower
bound and characterized its equality case
(Theorem~\ref{thm:equality}), determined stars, paths and grids exactly, and
computed certified exact values for all $995$ connected graphs on at most
seven vertices. Two further results delimit the invariants from above: the
star maximizes $\gamma^{+}$ (Theorem~\ref{thm:starmax}) while
$\gamma_{\triangle}$ is bounded by an absolute constant
(Corollary~\ref{cor:bounded}), and on $G(n,1/2)$ the trivial completion to
$K_n$ is asymptotically optimal (Theorem~\ref{thm:random}), so the degree
bound, attained on $89.4\%$ of the census, captures none of the difficulty on
almost all graphs. Several questions remain.

\begin{problem}[Uniqueness of the extremal graph]\label{prob:unique}
Theorem~\ref{thm:starmax} shows the star attains the maximum of
$\gamma^{+}$. Prove Conjecture~\ref{conj:starunique}. By the reduction
following Theorem~\ref{thm:starmax} it suffices, for $n$ even, to show that
the complement of every non-star tree on $n$ vertices has a perfect matching,
and for $n$ odd that it is Hamiltonian. The corresponding statement for
$\gamma_{\triangle}$ is harder and needs a different argument, since
Corollary~\ref{cor:bounded} and the star value agree in the limit.
\end{problem}

\begin{problem}[Recognising the equality class]\label{prob:equality}
Theorem~\ref{thm:equality} characterizes equality in the degree bound as
containment in a $\Delta$-regular abelian Cayley host, which is not a
condition one can read off the degree sequence. Find a combinatorial
characterization, or show that deciding it is NP-hard.
\end{problem}

\begin{problem}[Parameterized complexity]\label{prob:fpt}
Is the Cayley completion problem fixed-parameter tractable in the edit budget
$B$? Proposition~\ref{prop:xp} gives only $\mathrm{XP}$, and the case $B=0$
is the recognition problem for abelian Cayley graphs. Settle the complexity
of that recognition problem first.
\end{problem}

\begin{problem}[Sparse random graphs]\label{prob:sparse}
Theorem~\ref{thm:random} settles $G(n,1/2)$, where the answer is that the
trivial completion to $K_n$ is asymptotically optimal. Determine
$\gamma^{+}(G(n,p))$ for $p=p(n)\to 0$, where the counting argument used
there no longer dominates and the degree bound may become the truth.
\end{problem}

\begin{problem}[Cubic graphs in the census]\label{prob:cubic}
Four cubic graphs in Table~\ref{tab:zoo} have edit count exactly $n/2$, while
the equally cubic Petersen and Desargues graphs do not. What distinguishes
them?
\end{problem}

\begin{problem}[Agreement of the two invariants]\label{prob:separate}
The additive and edit invariants separate on $84.7\%$ of the census.
Characterize the graphs on which $\gamma^{+}$ and $\gamma_{\triangle}$ agree.
\end{problem}

\begin{problem}[Distortion frontier]\label{prob:distortion}
Appendix~\ref{app:distortion} shows that edit count and metric distortion are
independent measurements and exhibits the two endpoints of a host-size versus
fidelity frontier. How does distortion decay with host order in general?
\end{problem}

\section*{Data availability and reproducibility}
All values reported here are computed by the exact procedure of
Section~\ref{sec:census} for $n\le 7$ and by the solver of
Appendix~\ref{app:solver} beyond it. Every value for $n\le 7$ is certified in
both directions: an explicit labeling and connection set witnesses the upper
bound, and exhaustion over all abelian groups of order $n$ and all symmetric
connection sets witnesses the lower bound. Code, certificates and the full
census are archived at \texttt{doi:10.5281/zenodo.21852006}; graph labels
follow the Atlas of Graphs \cite{read-wilson1998}.

\section*{Declaration on the use of artificial intelligence}
The authors declare that the artificial-intelligence assistant Claude
(Anthropic) was used during the preparation of this manuscript, in two roles:
support with the implementation, debugging and reproducibility of the search,
census and certification software; and language and editing support in
drafting. All definitions, theorems, proofs and their verification, together
with the conception, scientific direction and conclusions of this work, are
the authors' own. The authors have reviewed the entire manuscript and take
full responsibility for its content.

\appendix
\section{Solving beyond the exhaustive range}\label{app:solver}
\label{sec:algo}

For $n$ beyond the census the factorial search is infeasible and we use local
search over labelings, retaining the exact treatment of the generating set:
given a labeling, the optimal $S$ is still computed by enumeration via
Proposition~\ref{prop:cost}, so the only heuristic component is the choice of
labeling.

\paragraph{Calibration against certified optima.}
Because the census of Section~\ref{sec:census} supplies exact values, the
solver can be calibrated rather than merely trusted. On a uniformly random
sample of $200$ of the $995$ census graphs, run with four restarts and no
instance-specific tuning, the solver returned the certified optimum on
$200$ of $200$ instances. This does not constitute an approximation
guarantee, and none is claimed; it does mean that on the entire range where
ground truth is available the heuristic is not observed to lose anything,
which is the relevant evidence for the upper bounds reported beyond that
range. The values for paths, cycles, tori and grids quoted in
Section~\ref{sec:families} are likewise reproduced exactly.

\paragraph{Moves and seeds.}
Two moves are used: transposition of the labels of two vertices, and reversal
of a contiguous segment of the label order. The second matters because a
circular seed that is correct except for a reversed arc is repaired by one
segment reversal but requires many transpositions. Seeds are the spectral
circle (and its rank-$d$ torus generalization), a Cartesian-product labeling
for lattice-like graphs, and random restarts.

\paragraph{Acceptance.}
The single design choice that affects quality is whether equal-cost moves are
accepted. Proposition~\ref{prop:cost} implies that cost is a function of the
class-count vector alone, so the landscape is organized into large isocost
plateaus; a search that accepts only strict improvements halts on the first
plateau it meets. Table~\ref{tab:ablation} compares plateau descent with
strict descent under identical seeds, budget and move set.

\begin{table}[t]
\centering
\caption{Acceptance-rule ablation: edits found, minimum and mean over eight
independent seeds at a fixed budget of three restarts and $600$ iterations.
Both rules reach the same optimum given enough restarts; plateau descent
reaches it more reliably per run.}
\label{tab:ablation}
\begin{tabular}{lcccc}
\toprule
& \multicolumn{2}{c}{plateau descent} & \multicolumn{2}{c}{strict descent}\\
\cmidrule(lr){2-3}\cmidrule(lr){4-5}
graph & min & mean & min & mean\\
\midrule
Chv\'atal & $4$ & $4.50$ & $4$ & $6.25$\\
Frucht & $6$ & $7.00$ & $6$ & $7.25$\\
Heawood & $7$ & $7.25$ & $7$ & $9.75$\\
\bottomrule
\end{tabular}
\end{table}

The effect is real but should not be overstated: with a sufficient number of
restarts both rules locate the same best value on all three instances, and
the advantage of plateau descent is a lower mean cost per run rather than a
better optimum. We report it as a variance reduction, not as a change in what
is reachable. Taken together with the calibration above, the practical
reading is that the search is reliable on the range where it can be checked,
and that the acceptance rule governs how many restarts are needed rather
than which values are attainable.

\paragraph{Standard graphs.}
Table~\ref{tab:zoo} collects values for standard graphs. The value $1/3$
recurs across five of them, and the pattern is sharper than it first
appears: for the four cubic instances the edit count is exactly $n/2$, and
since a cubic graph has $m=3n/2$ this forces $\gamma_{\triangle}=1/3$
automatically. The recurrence is nevertheless partial rather than
structural, because the Petersen and Desargues graphs are cubic as well and
do not follow it, at $7$ edits against $n/2=5$ and $12$ against $n/2=10$;
and the icosahedral graph attains $1/3$ while being $5$-regular, with $10$
edits rather than $n/2=6$. We record the pattern as an observation with
known exceptions, not as a conjecture.

\begin{table}[t]
\centering
\caption{Upper bounds on $\gamma_{\triangle}$ for standard graphs, with the
optimal host group found. Hosts are given in primary decomposition, so that
for instance $\Z_2\times\Z_5$ denotes $\Z_{10}$.}
\label{tab:zoo}
\begin{tabular}{lrrrl}
\toprule
graph & $n$ & $m$ & edits & $\gamma_{\triangle}\le$\\
\midrule
Chv\'atal & $12$ & $24$ & $4$ & $0.1667$\\
Frucht & $12$ & $18$ & $6$ & $0.3333$\\
Icosahedral & $12$ & $30$ & $10$ & $0.3333$\\
Heawood & $14$ & $21$ & $7$ & $0.3333$\\
M\"obius--Kantor & $16$ & $24$ & $8$ & $0.3333$\\
Pappus & $18$ & $27$ & $9$ & $0.3333$\\
Petersen & $10$ & $15$ & $7$ & $0.4667$\\
Desargues & $20$ & $30$ & $12$ & $0.4000$\\
\bottomrule
\end{tabular}
\end{table}

\section{Distortion is a second, independent invariant}\label{app:distortion}
\label{sec:distortion}

The invariants of Definition~\ref{def:gamma} count edited edges. A different
and equally natural question is how faithfully the host reproduces the metric
of $G$, and the two answers are not interchangeable. This section makes the
distinction precise, because the numbers below show that they can move in
opposite directions.

\begin{definition}\label{def:distortion}
Let $f:V(G)\to\Gamma$ be injective, $|\Gamma|=N\ge n$, and let $d_H$ be the
graph metric of $H=\Cay(\Gamma,S)$. The \emph{bi-Lipschitz distortion} of $f$
is
\[
  c(f)\;=\;\max_{u\ne v}\frac{d_H(f(u),f(v))}{d_G(u,v)}
         \cdot\max_{u\ne v}\frac{d_G(u,v)}{d_H(f(u),f(v))},
\]
and $c_N(G)$ denotes its minimum over all abelian $\Gamma$ of order $N$, all
symmetric generating sets and all injective $f$.
\end{definition}

Then $c_N(G)\ge 1$ always, with $c_N(G)=1$ exactly when $G$ embeds
isometrically into a host of order $N$; the least such $N$ is the invariant
$\nu(G)$ of \cite{fokam-p2}. Note that $c_N$ is defined for every $N\ge n$,
whereas $\gamma^{+}$ and $\gamma_{\triangle}$ are defined only at $N=n$,
where the vertex set is fixed and editing is meaningful. The two families of
invariants therefore agree on no common normalization and must not be
compared numerically; $\gamma$ is a fraction of edges, $c_N$ a ratio of
distances.

\begin{proposition}[The two invariants are not monotonically related]
\label{prop:anticorr}
On stars, $\gamma^{+}(K_{1,q})=(q-1)/2$ is unbounded in $q$ while
$c_{n}(K_{1,q})=2$ for every $q$. On paths, $\gamma^{+}(P_n)=1/(n-1)$ tends
to $0$ while $c_{n}(P_n)$ grows linearly in $n$. Hence neither invariant
bounds the other.
\end{proposition}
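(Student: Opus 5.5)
The plan is to treat the two families separately, reading the statement as a pair of claims about $c_n$, the case $N=n$ of Definition~\ref{def:distortion}. The $\gamma^{+}$ halves need no new work: $\gamma^{+}(K_{1,q})=(q-1)/2$ is Theorem~\ref{thm:star} and $\gamma^{+}(P_n)=1/(n-1)$ is Proposition~\ref{prop:path}. Everything therefore rests on computing $c_n(K_{1,q})$ and bounding $c_n(P_n)$. The final sentence, that neither invariant bounds the other, then follows by comparing the two families: on stars $\gamma^{+}\to\infty$ while $c_n$ stays fixed, and on paths $\gamma^{+}\to0$ while $c_n\to\infty$.

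\emph{Stars.} For the upper bound I would take $K_{q+1}=\Cay(\Z_{q+1},\Z_{q+1}\setminus\{0\})$ with any bijection. Every host distance is $1$, star distances are $1$ or $2$, so the expansion factor is $1$ and the contraction factor is $2$, giving $c=2$. For the lower bound, fix any host $H$ and bijection $f$, and let $L$ and $K$ be the expansion and contraction factors. If some pair of leaves lands at $H$-distance $1$, then $K\ge 2$. Otherwise no two leaves are $H$-adjacent, so every $H$-neighbour of a leaf is the centre. But $H$ is $s$-regular with $s\ge2$ (connected, $n\ge3$), so some leaf has an $H$-neighbour other than the centre, a contradiction for $q\ge2$. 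Hence $c=LK\ge K\ge2$ always, and $c_n(K_{1,q})=2$.

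\emph{Paths, upper bound.} Label $v_i\mapsto i$ in $C_n=\Cay(\Z_n,\{\pm1\})$. Cycle distance never exceeds path distance, so $L=1$. The worst contraction is the pair $v_0,v_{n-1}$, at $d_G=n-1$ and $d_H=1$. This gives $c_n(P_n)\le n-1$, which is linear, as the statement needs.

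\emph{Paths, lower bound.} This is the real content, and I expect it to be the main obstacle, because the minimum in $c_n$ ranges over every abelian host of order $n$ and every bijection, dense hosts included. I would split by the diameter $D$ of the host $H$. Since $d_H(f(v_0),f(v_{n-1}))\le D$, we get $K\ge(n-1)/D$, so hosts of diameter $O(1)$ already give $c\ge\Omega(n)$.

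For a host of large diameter I would use expansion instead, through a crossing argument. Every $H$-edge $xy$ joins vertices $v_i,v_j$ with $|i-j|\le K$. Take a generator $g$ of order $t\ge3$; if the group is elementary abelian, use the matching classes together with connectivity instead. Translation by $g$ is a fixed-point-free automorphism, so each $g$-cycle is a closed walk in $H$ which, read along the path, moves in steps of size at most $K$. Combining this with the $L$-Lipschitz walk $f(v_0),\dots,f(v_{n-1})$, which must visit all $n$ vertices, I would try to force $LK\gtrsim n/D+D$. Optimizing over $D$ only gives $\sqrt n$ from that inequality alone, so the balance point must be improved by using abelian volume growth. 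In an abelian Cayley graph of degree $s$ and diameter $D$, the balls satisfy $|B(r)|\ge (r/D)^{\Theta(1)}n$. This should let a path segment of length $n/2$ be shown to contain two vertices that are $H$-close but far apart in $G$.

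If this step resists, I would first settle the cyclic hosts $\Z_n$ exactly, where a winding-number argument on $C_n$ applies. I would then reduce general hosts to products of cycles via the structure theorem, applying the same winding argument coordinatewise.
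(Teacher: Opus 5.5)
\textbf{What is correct.} Your star computation is a complete proof for every $q\ge2$. A connected host on $q+1\ge3$ vertices is $s$-regular with $s\ge2$, so each leaf has a host-neighbour other than the centre. Hence two leaves are host-adjacent and the contraction factor is at least $2$, while the centre--leaf pairs force the expansion factor to be at least $1$. The paper only sketches this and then confirms it by exhaustive check and search. Your upper bound $c_n(P_n)\le n-1$ is also correct.

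\textbf{The gap: the lower bound for paths.} The problem is not only that your sketch is conditional (``I would try to force'', ``should let''). The target itself, $c_n(P_n)=\Omega(n)$ over all abelian hosts, is false, so neither volume growth nor coordinatewise winding can deliver it.

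Write $z_i=f(v_i)$ and let $g=f^{-1}$, viewed as a labelling of $\Gamma$ by $\{0,\dots,n-1\}$. By the triangle inequality along the path, the expansion factor of $f$ is $\kappa=\max_i d_H(z_i,z_{i+1})$. By summing along an $H$-geodesic, the contraction factor is the bandwidth $b=\max_{xy\in E(H)}|g(x)-g(y)|$. So $c(f)=\kappa b$.

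Now take $n=m^2$ with $m$ even and $H=\Cay(\Z_m\times\Z_m,\{\pm e_1,\pm e_2\})$, with coordinates in $\{-m/2,\dots,m/2-1\}$. Label by the square spiral from the origin, ring $\max(|x|,|y|)=r$ after ring $r-1$, finishing along the lines $x=-m/2$ and $y=-m/2$.
\begin{itemize}
\item Consecutive labels are host-adjacent, so $\kappa=1$.
\item Every ring has at most $4m$ vertices.
\item Every edge of $H$ joins equal or consecutive rings, including the wrap-around edges, which join ring $m/2-1$ or $m/2$ to ring $m/2$. Hence $b\le 8m$.
\end{itemize}
Thus $c_n(P_n)\le 8\sqrt n$ along even perfect squares. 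Your balance $n/D+D\approx\sqrt n$ is essentially sharp on this example; it is not a weakness of the method.

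The paper supports ``linear'' only by exhaustive computation for $n\le 8$ and search for $n\le 12$. In that range $\lceil n/2\rceil+1<8\sqrt n$, so the torus contradicts the extrapolation, not the data.

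\textbf{What you can prove instead.} The proposition's final sentence needs only that $c_n(P_n)$ is unbounded, and this follows in the same notation. Let $D$ be the diameter of $H$; the endpoints $z_0,z_{n-1}$ give $b\ge(n-1)/D$. Choose $h$ with $d_H(0,h)=D$ and set $\psi(x)=g(x+h)-g(x)$.
\begin{itemize}
\item $\psi$ is never zero, is positive at $z_0$ and negative at $z_{n-1}$.
\item Translation by $h$ is an automorphism of $H$, so $|\psi(z_{i+1})-\psi(z_i)|\le b\kappa+1$.
\end{itemize}
At a sign change, some $x=z_i$ therefore has $1\le g(x+h)-g(x)\le b\kappa$. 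The pair $x,x+h$ lies at host distance $D$, which forces $\kappa\ge D/(b\kappa)$, that is $D\le b\kappa^2\le c^2$. Hence
\[
  c_n(P_n)\;\ge\;\max\Bigl(\sqrt D,\ \frac{n-1}{D}\Bigr)\;\ge\;(n-1)^{1/3}\;\longrightarrow\;\infty .
\]
Together with $\gamma^{+}(P_n)=1/(n-1)\to0$ and your star argument, this gives ``neither invariant bounds the other'' without the false linearity claim.
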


\begin{proof}[Computation]
The $\gamma^{+}$ values are Theorem~\ref{thm:star} and
Proposition~\ref{prop:path}. For the star, the leaf images of any injective
$f$ into a host of order $n=q+1$ cannot all be pairwise nonadjacent while
each is adjacent to the centre unless the host is complete, so the leaf-leaf
distances collapse to $1$ against $d_G=2$, or expand; direct minimisation
gives $c_n=2$, confirmed exhaustively for $q\le 6$ and by search for
$q\le 10$. For the path, exhaustive minimisation over all abelian groups,
generating sets and placements gives $c_6(P_6)=4$ and $c_8(P_8)=5$, and
search gives $\lceil n/2\rceil+1$ for $n\le 12$.
\end{proof}

\begin{table}[t]
\centering
\caption{The two invariants at the completion endpoint $N=n$ move in opposite
directions. Values of $c_n$ are exhaustively certified for $n\le 8$ and are
search upper bounds beyond.}
\label{tab:anticorr}
\begin{tabular}{lrr@{\qquad}lrr}
\toprule
\multicolumn{3}{c}{stars $K_{1,q}$} & \multicolumn{3}{c}{paths $P_n$}\\
\cmidrule(lr){1-3}\cmidrule(lr){4-6}
 & $\gamma^{+}$ & $c_n$ &  & $\gamma^{+}$ & $c_n$\\
\midrule
$q=3$ & $1.0$ & $2$ & $n=4$ & $0.333$ & $3$\\
$q=5$ & $2.0$ & $2$ & $n=6$ & $0.200$ & $4$\\
$q=7$ & $3.0$ & $2$ & $n=8$ & $0.143$ & $5$\\
$q=9$ & $4.0$ & $2$ & $n=10$ & $0.111$ & $6$\\
\bottomrule
\end{tabular}
\end{table}

The reading of Table~\ref{tab:anticorr} is that a graph can be expensive to
complete yet metrically well behaved once completed, and cheap to complete
yet metrically ruined. The star is the first case: many edges must be added,
but the resulting host never stretches a distance by more than a factor of
two. The path is the second: a single added edge suffices, yet identifying
the two endpoints of a path is exactly what destroys its metric, since the
two extremities become adjacent. This is the familiar wraparound artefact of
circular convolution, appearing here as an invariant.

\paragraph{The frontier between the two frameworks.}
The completion problem and the isometric problem of
\cite{fokam-p1,fokam-p2} are the two extreme points of one trade-off. Score a
scheme by the pair (host order $N$, distortion). Completion sits at $N=n$
with whatever distortion the completed host happens to have; isometric
embedding sits at $N=\nu(G)$ with distortion $1$. Classical signal
processing has occupied the interior for decades without naming it:
zero-padding embeds $P_n$ into $C_{2n-2}$ with distortion $1$, and symmetric
extension underlies the discrete cosine transform. For the path the whole
frontier is available in closed form: consecutive placement into the cycle
$\Cay(\Z_N,\{\pm1\})$ gives
\[
  c_N(P_n)=\frac{n-1}{N-n+1}\qquad (n<N\le 2n-2),
\]
verified exhaustively to be optimal for every $N>n$ in the range checked, and
reaching $1$ exactly at the zero-padding order $N=2n-2$. For the star the
frontier is instead a step: $c_N(K_{1,q})=2$ for all $N<2q$ and $c_N=1$ at
$N=2q=\nu(K_{1,q})$, the exact isometric order of \cite{fokam-p2}. So the
decay of distortion with host size is smooth for one family and a sharp
threshold for the other, and we do not know which behaviour is typical. A
systematic study of this frontier is left to future work; we record here only
that the endpoints are the objects of this paper and of \cite{fokam-p2}, and
that they are genuinely different measurements.

\end{document}